\newtheorem{thm}{Theorem}[section] 
\newtheorem{lemma}[thm]{Lemma}
\newtheorem{corollary}[thm]{Corollary}
\newtheorem{lm}[thm]{Lemma}
\theoremstyle{definition}
\newtheorem{defn}[thm]{Definition}
\newtheorem{obs}[thm]{Observation}
\newtheorem{remark}[thm]{Remark}
\newtheorem{rmk}[thm]{Remark}
\newtheorem{example}[thm]{Example}
\newcommand{\R}{{\mathbb{R}}}
\newcommand{\RR}{\mathfrak{R}}
\renewcommand{\P}{\mathcal{P}}
\newcommand{\PP}{\mathbb P}
\newcommand{\C} {\mathbb C}
\newcommand{\mH}{H}
\newcommand{\mB}{\mathcal{B}}
\newcommand{\ideal}[1]{( {#1})}  %% WE FIX IDEAL NOTATION FOR THE ENTIRE PAPER
\newcommand{\vecsp}[1]{\langle {#1}\rangle} %% WE FIX VECTOR SPACE NOTATION FOR THE ENTIRE PAPER
\newcommand{\vi}{{\sf V}}  % fix the notation for VIOLATOR throughout the paper 
\newcommand{\Basis}{\mathrm{Basis}}
\newcommand{\BruteForce}{\textrm{BruteForce}}
\DeclareMathOperator{\V}{\mathcal V}
\DeclareMathOperator{\coeff}{coeff}
\def\jdlqed{\vbox{\hrule \hbox{\vrule\hbox to
5pt{\vbox to 6pt{\vfil}\hfil}\vrule}\hrule}}
\newcommand{\despina}[1]{\marginpar{\it\textcolor{red}{[D: #1]}}}
\newcommand{\sonja}[1]{\marginpar{\it\textcolor{red}{ [S: #1]}}}
\begin{document}

\title[Helly Numbers and Violator Spaces in Computational  Algebra]{Random Sampling in Computational Algebra: \\Helly Numbers and Violator Spaces} 
\author{Jes\'us A. De Loera}
\address{J. De Loera,  Department of Mathematics, 
University of California, Davis.}
\email{deloera@math.ucdavis.edu}

\author{Sonja Petrovi\'c}
\address{S. Petrovi\'c, Applied Mathematics Department, Illinois Institute of Technology.}
\email{Sonja.Petrovic@iit.edu}

\author{Despina Stasi}
\address{D. Stasi, %Department of Computer Science, University of Cyprus and 
 Applied Mathematics Department, Illinois Institute of Technology.}
\email{despina.stasi@gmail.com}
\maketitle

\begin{abstract}
This paper transfers a randomized algorithm, originally used in geometric optimization, to computational problems in commutative algebra.	
We show that Clarkson's sampling algorithm can be applied to two  problems in computational algebra: solving large-scale polynomial systems 
and finding small generating sets of graded ideals. The cornerstone of our work is showing that the theory of violator spaces of G\"artner et al.\ 
applies  to polynomial ideal problems. To show this, one utilizes a Helly-type result for algebraic varieties.
The resulting algorithms have  expected runtime linear in the number of input polynomials,  making  the ideas interesting for 
handling systems with very large numbers of polynomials, but whose rank in the vector space of polynomials is small 
(e.g.,  when the number of variables and degree is constant).
\end{abstract}

%\begin{keyword}
%Violator spaces, ideal generators, solving polynomial systems, randomized algorithm in algebra, large sparse systems. 
%AMS subject classification: 68W20, 68R05, 12Y05, 13P10, 14Q10, 08A40.
%\end{keyword}

\thanks{Keywords: Violator spaces, ideal generators, solving polynomial systems, randomized algorithm in algebra, large sparse systems. \\
AMS subject classification: 68W20, 68R05, 12Y05, 13P10, 14Q10, 08A40.
}

\section{Introduction}

Many computer algebra systems offer excellent algorithms for  manipulation  of polynomials. But despite great success in the field, many algebraic problems have bad worst-case complexity. For example, Buchberger's~\cite{Buchberger65,Buchberger1965Translated,clo} groundbreaking algorithm, key to symbolic computational algebra today, computes a Gr\"obner basis of any ideal, but it has a worst-case runtime that is doubly exponential in the number of variables~\cite{dube}. This presents the following problem: \emph{what should one do about computations whose input is a  very large, overdetermined system of polynomials? In this paper,  we  propose to use randomized sampling algorithms to ease the computational cost in such cases.}

One can argue that much of the success in computation with polynomials (of non-trivial size) often relies heavily on finding specialized structures.
Examples include  Faug{\`e}re's et al.\  fast computation of Gr\"obner bases of  zero-dimensional ideals  \cite{faugere2013polynomial,faugereetal,faugere2014sparse,gao2009counting,lakshman1990complexity},  
specialized software for computing generating sets of toric ideals \cite{4ti2},  several   packages in \cite{M2} built specifically to handle monomial ideals, 
 and the study of sparse systems of polynomials (i.e., systems with  fixed support sets of monomials)  and the associated homotopy methods \cite{sturmfels1991sparse}. A more
recent example of the need to find good structures is in \cite{cifuentes-parrilo}, where Cifuentes and Parrilo began exploiting  chordal graph structure in computational commutative algebra, and in particular, for solving polynomial systems. Our paper exploits combinatorial structure implicit in the input polynomials, but this time  
akin to Helly-type results from convex discrete geometry \cite{Mbook}.

At the same time, significant improvements in efficiency have been obtained by algorithms that involve randomization, rather than deterministic ones (e.g.\ \cite{factoring,primality}); it is also widely recognized that there exist hard problems for which pathological examples requiring exponential runtimes occur only rarely, implying an obvious advantage of considering average  behavior analysis of many algorithms.
For example, some forms of the simplex method for solving  linear programming problems have worst-case complexity that is exponential, yet \cite{spielmanteng2} has recently shown that in the \textit{smoothed analysis of algorithms} sense, the simplex method is a rather robust and fast algorithm.  Smoothed analysis combines the  worst-case and average-case algorithmic analyses by measuring the expected performance of algorithms under slight random perturbations of worst-case inputs. Of course, probabilistic analysis, and smoothed analysis in particular, has been used in computational algebraic geometry for some time now, see  e.g., the elegant work in ~\cite{BePa08,BePa08a,burgissercucker}. 
The aim of this paper is to import a  randomized sampling  framework from geometric optimization  to applied computational algebra, and demonstrate its usefulness on two problems. 

\subsection*{Our contributions}
We apply the theory of violator spaces \cite{ViolatorSpaces2008} to polynomial ideals and  adapt Clarkson's sampling algorithms \cite{c-lvali-95} to provide efficient randomized algorithms 
for the following concrete problems:

{\em
\begin{enumerate}
\item solving  large (overdetermined) systems of multivariate polynomials equations, 
\vskip .1cm

\item  finding small, possibly minimal, generating sets of homogeneous ideals.   
\end{enumerate}
}
\vskip .1cm

Our method is based on using the notion of a \emph{violator space}.  Violator spaces were introduced in 2008 by
G{\"a}rtner, Matou{\v{s}}ek, R{\"u}st, and {\v{S}}kovro{\v{n}} \cite{ViolatorSpaces2008} in a different context. Our approach allows us to adapt  Clarkson's sampling techniques \cite{c-lvali-95} for computation with polynomials. Clarkson-style algorithms rely on  computing with small-size subsystems, embedded in an iterative biased sampling scheme. In the end,  the local information is used to make a global decision about the entire system. The expected runtime  is linear in the number of input elements, which is  the number of polynomials in our case (see \cite{BriseGartnerConf2009} for a more recent simplified version of Clarkson's algorithm for violator spaces). 
Violator spaces naturally appear in problems that have a natural linearization and a sampling size given by a combinatorial \emph{Helly number} of the problem.
While violator spaces and Clarkson's algorithm have already a huge range of applications, to our knowledge, this is the first time such sampling algorithms are being used in computational algebraic geometry. 
For an intuitive reformulation of Helly's theorem for algebraic geometers, see Example~\ref{ex:affineLinear}. Main ingredients of violator spaces are illustrated through  Examples~\ref{ex:intersectRealPlaneCurves}, \ref{ex:intersectRealPlaneCurves-violator} and \ref{ex:intersectRealPlaneCurves-basis+delta+primitive}. A typical setup where problem (1) can be difficult and a randomized algorithm appropriate can be found in Example~\ref{example:coloring}. 

 Before stating the main results, let us fix the notation used throughout the paper. We assume the reader is acquainted with the basics of computational algebraic
geometry as in the award-winning undergraduate textbook \cite{clo}.
Denote by  $K$  a (algebraically closed) field; the reader may keep $K=\mathbb C$ in mind as a running example. 
Let $f_1=0,\dots,f_m=0$ be a system of $m$ polynomials in $n+1$ variables with coefficients in $K$.  We usually assume that $m\gg n$. 
As is customary in the algebra literature, we write $f_1,\dots,f_m\in \RR=K[x_0\dots,x_n]$ and often denote the polynomial ring $\RR$ by a shorthand notation $K[x]$. We will denote by  $\ideal{f_1,\dots,f_m}\subset \RR$ the ideal generated by these polynomials; that is, the set of all polynomial combinations of the $f_i$'s. Note that if $F=\{f_1,\dots,f_m\}$ is a set of polynomials, the ideal $\ideal{f_1,\dots,f_m}$ will equivalently be denoted by $\ideal{F}$. 

A polynomial is said to be homogeneous if all of its terms are of same degree; an ideal generated by such polynomials is a homogeneous ideal. 
   In this paper, the ideals we consider need not be homogeneous; if they are, that will be explicitly stated. %Additionally,  in the graded  case, 
    In that case, the set of all homogeneous polynomials of total degree $d$ will be denoted by $[\RR]_d$. 
Finally, denote by $\V(S)$ the (affine) variety defined by the set of polynomials $S\in\RR$, that is, the Zariski closure of the set of common zeros of the polynomials in the system $S$.  Therefore, the concrete problem (1) stated above simply asks for the explicit description of the variety (solution set) corresponding to an ideal (system of polynomial equations). The concrete problem (2) asks to find a smaller (e.g., minimal with respect to inclusion) set of polynomial equations that generate the same ideal - and thus have the exact same solution set. %One can think of problem (2) as the ideal-theoretic version of problem (1), and (1) the set-theoretic version of problem (2). 

\subsection*{Solving large polynomial systems} 
Suppose we would like to solve a system of $m$ polynomials in $n+1$ variables over the field $K$, and suppose that $m$ is large.  We are interested in the coefficients of the polynomials as a way to linearize the system.  To that end, recall first that the $d$-th Veronese embedding of $\PP^n$ is the following map $\nu_d:\PP^n \to \PP^{{n+d\choose d}-1}$: 
\begin{align*}
%	\nu_d:\PP^n &\to \PP^{{n+d\choose d}-1} \\
	(x_0:\dots:x_n) &\mapsto (x_0^d:x_0^{d-1}x_1:\dots:x_n^d).
\end{align*}
The map $\nu_d$ induces a coefficient-gathering map for homogeneous polynomials in fixed degree $d$:  %\despina{$p$ and ${{n+d} \choose d}$ are the same. delete $p$? the subscript looks large, but I think it's fine}
\begin{align*}
	\coeff_d:[\RR]_d &\to K^{n+d\choose d} \\
	\sum_{\alpha:|\alpha|=d}c_\alpha x^\alpha &\mapsto \left[c_{\alpha_1},\dots,c_{\alpha_{n+d\choose d}}\right], 
\end{align*}
%\begin{align*}
%	\coeff_d: [R]_d &\to k^{d+n\choose d}\\
%	f=\sum_{\alpha:|\alpha|=d}c_\alpha x^\alpha &\mapsto [c_{\alpha_1},\dots,c_{\alpha_p}],
%\end{align*}
where %$p={n+d\choose d}$ and  
$x^{\alpha_i}$ corresponds to the $i$-th coordinate of the $d$-th Veronese embedding. We follow the usual notation $|\alpha|=\sum_i \alpha_i$. 
Therefore, if  $f$ is a homogeneous polynomial of $\deg(f)=d$, $\coeff(f)$ is a vector in the $K$-vector space $K^{n+d\choose d}$. 
This construction can be extended to non-homogeneous polynomials in the following natural way. Consider all distinct total degrees $d_1,\dots,d_s$ of monomials that appear in a non-homogenous polynomial $f$. For each $d_i$, compute the image under $\coeff_{d_i}$ of all monomials of $f$ of degree $d_i$. Finally, concatenate all these vectors into the total coefficient vector of $f$, which we will call $\coeff(f)$ and which is of size ${n+d+1\choose n+1}$, the  number of monomials in $n+1$ variables of (total) degree ranging from $0$ to $d$. 
In this way, a system $f_1,\dots,f_m$ of polynomials in $n$ variables of degree at most $d$ can be represented by its \emph{coefficient matrix} of size ${n+d+1\choose n+1}\times m$. Each column of this matrix corresponds to the vector produced by the map $\coeff_d$ above. This  map allows us to think of polynomials as points in a linear affine space, where Helly's theorem applies. 

We utilize this construction to import  Clarkson's method \cite{c-lvali-95} for solving linear problems to algebraic geometry and, in particular, we make use of Helly-type theorems for varieties. Helly-type theorems allow  one to reduce the problem of solving the system to repeated solution of smaller subsystems, whose size is a Helly number of intersecting linear spaces. As a result, our algorithms achieve {\it expected linear runtime} in the number of input equations. 

% See comment below the theorem regarding this: 
%
% The method works particularly well for fixed or small number of variables.  %- do we need this last comment about number of variables? it takes the focus away from the *size* (and rank) of the system? Also it is not clear how the number of variables affects the rank at all. 

\begin{thm} \label{thm:solve}
Let $F=\{f_1,\dots,f_m\}\subset\RR$ be a system of  polynomials, 
  %% Consider the vector subspace  generated by (the coefficient of) the $f_i$'s, in the vector space whose coordinate vectors are indexed by all monomials that appear in some $f_i$. Let $\delta$ be the dimension of this vector space. 
and let $\delta$ be the dimension of the vector subspace generated by the coefficient vectors of the $f_i$'s, as described above.  

%\sout{Then there exists a  sampling algorithm that finds  $\{f_{i_1},\dots,f_{i_\delta}\}$ that the define the same variety as the given system in an expected number $O\left(\delta m + \delta^{O(\delta)}\right)$  of calls to the primitive query that solves a small radical ideal membership problem. %% NOTE that radical ideal membership is done with ONE Groebner computation; see section 4.   I wrote this down there. 
%In other words,  $\V(f_1,\dots,f_m) = \V(f_{i_1},\dots,f_{i_\delta})$ and, in particular, $\{f_{i_1},\dots,f_{i_\delta}\}$ generate the  same ideal as the input system up to radicals. }
Then there exists a  sampling algorithm that outputs $F'=\{f_{i_1},\dots,f_{i_\delta}\}\subset F$ such that \mbox{$\V(F) = \V(F')$}  in an expected number $O\left(\delta m + \delta^{O(\delta)}\right)$  of calls to the primitive query  that solves a small radical ideal membership problem. 
      % NOTE that radical ideal membership is done with ONE Groebner computation; see section 4.   I wrote this down there.  
$F$ and $F'$ generate the  same ideal up to radicals.
%Equivalently, $F'$ generates the same ideal up to radicals as $F$.
\end{thm}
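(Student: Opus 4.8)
The plan is to attach to the system $F$ a \emph{violator space} $(F,\vi)$ in the sense of \cite{ViolatorSpaces2008} whose combinatorial dimension is at most $\delta$, and then run Clarkson's sampling algorithm \cite{c-lvali-95} (or the streamlined version of \cite{BriseGartnerConf2009}) on it. I would take as violator mapping, for a subsystem $G\subseteq F$,
\[
  \vi(G):=\{\,f\in F\ :\ \V(G)\not\subseteq \V(f)\,\},
\]
the polynomials of $F$ whose addition to $G$ strictly shrinks the solution set. Consistency, $G\cap\vi(G)=\emptyset$, is immediate because $\V(G)\subseteq\V(g)$ for every $g\in G$. For locality, if $G\subseteq G'\subseteq F$ and $\vi(G)\cap G'=\emptyset$, then every $g'\in G'$ satisfies $\V(G)\subseteq\V(g')$, so $\V(G')=\V(G)$; and since $\vi(\cdot)$ depends on its argument only through the variety it cuts out, $\vi(G)=\vi(G')$. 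Hence $(F,\vi)$ is a violator space, and --- because for $B\subseteq G$ one checks easily that $\vi(B)=\vi(G)$ iff $\V(B)=\V(G)$ --- its bases for $F$ are exactly the inclusion-minimal subsets $B\subseteq F$ with $\V(B)=\V(F)$.

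Next I would bound the combinatorial dimension of $(F,\vi)$ through the coefficient-vector linearization set up before the theorem. Let $W\subseteq\RR$ be the $K$-linear span of $f_1,\dots,f_m$; since the map $\coeff$ is a $K$-linear injection, $\dim_K W=\delta$. Every polynomial in $W$ vanishes on $\V(F)$, and $F\subseteq W$, so $\V(F)=\V(W)$; the same reasoning applies verbatim to any $G\subseteq F$, so the variety of a subsystem is determined by the $K$-span of its members. In particular, choosing $f_{i_1},\dots,f_{i_\delta}$ whose $K$-span equals $W$ already gives $\V(\{f_{i_1},\dots,f_{i_\delta}\})=\V(W)=\V(F)$, so $\delta$ polynomials always suffice. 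Moreover any inclusion-minimal $B\subseteq G$ with $\V(B)=\V(G)$ has $|B|\le\delta$: if $|B|>\delta$ the elements of $B$ are $K$-linearly dependent, so some $f_j\in B$ lies in the $K$-span of $B\setminus\{f_j\}$, whence $\V(B\setminus\{f_j\})\subseteq\V(f_j)$ and $\V(B\setminus\{f_j\})=\V(B)$, contradicting minimality. This is exactly the Helly-type behaviour of intersecting affine-linear families recorded in Example~\ref{ex:affineLinear}, transported along $\coeff$, and it bounds the combinatorial dimension of $(F,\vi)$ by $\delta$.

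Finally I would identify the oracle and quote the runtime of Clarkson's algorithm. The violation test ``is $f\in\vi(G)$?'' asks whether $\V(G)\not\subseteq\V(f)$, which by Hilbert's Nullstellensatz over the algebraically closed field $K$ is precisely the question whether $f\notin\sqrt{\ideal{G}}$ --- a radical ideal membership problem, and a \emph{small} one, since Clarkson's algorithm only ever evaluates it for subsystems $G$ of size $O(\delta)$ (an actual minimal basis of such a small $G$ is itself extracted using $O(\operatorname{poly}(\delta))$ of these tests, by greedily discarding redundant polynomials). Running Clarkson's violator-space algorithm on $(F,\vi)$ then returns a basis $B$ of $F$ after an expected $O\!\left(\delta m+\delta^{O(\delta)}\right)$ violation tests in total (the $\delta m$ term from the biased-sampling rounds, the $\delta^{O(\delta)}$ term from brute-forcing the small base-case subsystems); padding $B$ with arbitrary further elements of $F$ up to size exactly $\delta$ leaves the variety unchanged, because $\V(F)\subseteq\V(f)$ for every $f\in F$, and we output this set as $F'$, so that $\V(F')=\V(F)$ by construction. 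Applying the Nullstellensatz once more, $\V(F)=\V(F')$ forces $\sqrt{\ideal{F}}=\sqrt{\ideal{F'}}$, i.e.\ $F$ and $F'$ generate the same ideal up to radical. I expect the crux to be the second paragraph: one must confirm that the linear invariant $W$ genuinely controls the variety --- so that the problem really is ``linearizable'' --- and that no minimal basis can exceed $\delta$; with that Helly-type bound in hand, verifying the violator-space axioms, recognizing the oracle as radical membership, and reading off the expected number of queries are routine applications of the cited results.
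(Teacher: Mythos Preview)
Your proposal is correct and follows essentially the same route as the paper: you define the identical violator operator $\vi_{solve}(G)=\{f:\V(G)\not\subseteq\V(f)\}$, verify consistency and locality just as in Lemma~\ref{viosolve}, bound the combinatorial dimension by $\delta$ via the $K$-linear span argument (which is precisely the content of Theorem~\ref{thm:F-Helly-vars}), identify the primitive as radical ideal membership via the Nullstellensatz (Remark~\ref{rmk:solve=radical}), and then invoke the Clarkson bound of Theorem~\ref{keytoolvio}. The only cosmetic difference is that you fold the Helly-type bound directly into the proof rather than stating it as a separate theorem, and you make explicit the padding of the returned basis up to size~$\delta$; otherwise the arguments coincide.
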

 %In particular, there exists a sampling algorithm that solves a large system of polynomial equations by solving a small sub-system with the same solution set.
It is important to point out that our sampling algorithm will find a small subsystem  of the input system that, when solved with whatever tools one has at their disposal, will give the same solution set as the original (input) system. Here by `small' we mean a system of size $\delta$, where  $\delta$ is polynomially bounded or constant when the number of variables is constant or when the degree $d$ is small.

%%    KEEPING THIS HERE FOR OUR REFERENCE FOR NOW: 
% 
% When the number of variables  and  the total maximum degree of $f_i$'s are fixed, this is a randomized linear time algorithm. 
% ** A statement like the previous one is  part of our exposition problem: it is NOT at all clear, at least at this point,  what $n$ or $d$ have to do with anything in the theorem. In fact, the statement sort of puts the wrong emphasis for the reader, distracting them to think about degrees.  ? 
% Also: our theorem is useful when k >> n,d; and it is useful when we have massive (maybe not overdetermined) but sparse systems.  
That the rank $\delta$ of the coefficient matrix of the system of polynomials gives the combinatorial dimension for this problem is shown in Theorem~\ref{thm:F-Helly-vars}. 
There are several interesting special cases of this result. For example, we obtain \cite[Corollary 2]{deza+frankl}  as a corollary: if $f_1,\dots,f_m\in K[x_0,\dots,x_n]$ are homogeneous and of degree at most $d$ each, then the dimension $\delta$ of the vector subspace they generate is at most   ${{n+d}\choose{d}}$ (see Lemma~\ref{thm:DF-Helly}).   Of course, in many situations in practice, this bound is not sharp, as many systems are of low rank. For example, this situation can arise if the monomial support of the system is much smaller than the total number of monomials of degree $d$. In light of this, Theorem~\ref{thm:F-Helly-vars} gives a better bound for low-rank systems. Note that  we measure system complexity by its rank, that is, the vector space dimension $\delta$, and not the usual sparsity considerations such as the structure of the monomial support of the system.  Further, our result applies  to non-homogeneous systems as well.  Its proof is presented in Section \ref{sec:solve},  along with the proof of Theorem~\ref{thm:solve}. 

%\subsection*{Computing the radical of an ideal with many generators} 
%As a byproduct of the constructions used to prove Theorem~\ref{thm:solve}, we also obtain the following. 
%\begin{thm}\label{thm:radicals}
%	Let $I=\ideal{f_1,\dots,f_m}$ be an ideal and let $D$ be the dimension of the vector space generated by the coefficient vectors of the  $f_i$'s. 
%	Then there exists a sampling algorithm that computes $p_1,\dots,p_{D}\in I$ such that  $\sqrt{I}=\ideal{p_1,\dots,p_D}$ in an expected number of $O\left(Dm+D^{O(D)}\right)$ calls to primitive query that solves a small radical ideal membership problem. 
%\end{thm}
% This result is a consequence of Theorem~\ref{thm:radicalsViolator}; Section~\ref{sec:solve} provides the proofs.

\subsection*{Computing small generating sets of ideals} 
The problem of finding ``nice'' generating sets of ideals has numerous  applications in statistics, 
optimization, and other fields of science and engineering. 
%One classical example arises in statistics,  where sampling from conditional distributions using a Markov Chain requires computation of a  Gr\"obner   basis or other generating set (called a Markov basis) of a toric ideal.   
Current methods of calculating   minimal generating sets of ideals with an a priori large number of generators are inefficient  and  rely 
mostly on Gr\"obner bases computations, since they usually involve ideal membership tests. Of course 
there are exceptional special cases, such as ideals of points in projective space~\cite{Ramella} or binomial 
systems~\cite{4ti2}.   Our second main result  shows how to efficiently extract a small or close to minimal 
generating set for any ideal from a given large generating set and a bound on the size of a minimal generating set. 

\begin{thm} \label{thm:MinGenAlgo} Let $I=\ideal{\mH}$ be an ideal generated  by a (large) finite set of homogeneous  polynomials 
$\mH$, and suppose that $\gamma$ is a known upper bound for the 0-th total Betti number $\beta(R/I)$. 

Then there exists a randomized algorithm that computes a  generating set of $I$ of size $\gamma$ % or less
in expected number of $\mathcal O(\gamma |\mH|+\gamma^\gamma)$  calls to the  primitive query that solves a small ideal membership problem. 

In particular, if $\gamma=\beta(R/I)$, the algorithm computes a \emph{minimal} generating set of $I$. 
\end{thm}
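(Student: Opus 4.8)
The plan is to reduce Theorem~\ref{thm:MinGenAlgo} to the violator space framework in exactly the same way Theorem~\ref{thm:solve} is handled, but with a different ground set, a different "violation" relation, and a different combinatorial dimension. First I would set up the violator space: take the ground set to be $\mH$, and for a subset $G\subseteq\mH$ declare a polynomial $h\in\mH$ to be \emph{violated} by $G$ precisely when $h\notin\ideal{G}$, i.e.\ when adding $h$ to $G$ strictly enlarges the ideal $\ideal{G}$. Equivalently, working degree by degree and passing to the coefficient vectors of $\mH$ (concatenated over all relevant degrees, as in the $\coeff$ construction preceding Theorem~\ref{thm:solve}), $h$ is violated by $G$ iff $h$ does not lie in the ideal $\ideal G$; the associated \emph{basis} of $G$ is a minimal (w.r.t.\ inclusion) subset $B\subseteq G$ with $\ideal B=\ideal G$. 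I would then verify the two axioms of a violator space — \emph{consistency} ($G$ violates nothing outside of what is already forced) and \emph{locality} (if $\ideal{G_1}=\ideal{G_2}$ then $G_1$ and $G_2$ have the same violated set) — both of which are immediate from the fact that the violated set of $G$ depends only on the ideal $\ideal G$, and hence only on the $K$-vector subspace it generates in each graded piece.

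The second step is to pin down the combinatorial dimension, i.e.\ the maximum size of a basis. A basis is an inclusion-minimal generating set of the subideal $\ideal G$; any inclusion-minimal homogeneous generating set of a homogeneous ideal $J$ has cardinality exactly the $0$-th total Betti number $\beta(R/J)$ (this is the standard graded-Nakayama statement: minimal generators biject with a $K$-basis of $J/\mathfrak m J$, and all minimal generating sets have the same cardinality). Hence every basis arising from a subset $G\subseteq\mH$ has size $\beta(R/\ideal G)\le\beta(R/\ideal{\mH})=\beta(R/I)\le\gamma$, so the combinatorial dimension is at most $\gamma$. (Monotonicity $\beta(R/\ideal G)\le\beta(R/I)$ for $G\subseteq\mH$ holds because $\ideal G\subseteq I$ and a minimal generating set of $\ideal G$ extends to a generating set of $I$ of the same or larger size; alternatively it follows from the Helly-type bound invoked for Theorem~\ref{thm:solve} applied degree-wise.) With the violator space and the bound $\gamma$ on combinatorial dimension in hand, I would invoke Clarkson's sampling algorithm for violator spaces~\cite{c-lvali-95,BriseGartnerConf2009}: it returns a basis — here, a generating set of $I$ of size at most (in fact equal to the combinatorial dimension, which is $\le\gamma$; pad arbitrarily to size $\gamma$ if a fixed size is wanted) — in an expected $\mathcal O(\gamma|\mH| + \gamma^\gamma)$ calls to the \emph{violation primitive}, which here asks whether a single polynomial $h$ lies in the ideal generated by a set of at most $\gamma$ polynomials: a small ideal membership query. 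This is exactly the claimed complexity.

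For the final sentence, if $\gamma=\beta(R/I)$ then the combinatorial dimension equals $\gamma$, the returned basis has exactly $\beta(R/I)$ elements and generates $I$, so by the Betti-number characterization it is a minimal generating set; no padding is needed.

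The main obstacle, and the step I would spend the most care on, is the verification that "violated set depends only on $\ideal G$" genuinely yields a \emph{violator space} in the precise axiomatic sense of \cite{ViolatorSpaces2008} — in particular checking locality against the subtlety that $\ideal{G_1}=\ideal{G_2}$ must force equality of violated sets even when $G_1,G_2$ are incomparable, and checking that the basis size is well-defined and monotone, which is where the homogeneity hypothesis on $\mH$ is essential (graded Nakayama fails in the non-graded/local-ring-free setting). Once those axioms are in place, the algorithmic conclusion is a black-box application of Clarkson's method, and the Betti-number bookkeeping is routine.
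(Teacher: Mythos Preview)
Your overall architecture matches the paper's exactly: define $\vi_{SmallGen}(G)=\{h\in\mH:h\notin\ideal{G}\}$, check consistency and locality (this is Lemma~\ref{viosmallgen}), identify the primitive as ideal membership, bound the combinatorial dimension by $\gamma$, and invoke Clarkson via Theorem~\ref{keytoolvio}. The paper's own proof is a one-paragraph citation of precisely these ingredients.

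The gap is in your justification of the combinatorial-dimension bound. You assert $\beta(R/\ideal{G})\le\beta(R/I)$ for every $G\subseteq\mH$, reasoning that a minimal generating set of $\ideal{G}$ ``extends to a generating set of $I$ of the same or larger size.'' But the resulting generating set of $I$ need not be \emph{minimal}, so this says nothing about $\beta(R/I)$. Concretely, in $K[x,y]$ take $\mH=\{x,\,y,\,x^2,\,xy,\,y^2\}$; then $I=(x,y)$ and $\beta(R/I)=2$, yet $G=\{x^2,xy,y^2\}\subset\mH$ generates $\mathfrak m^2$, which requires three generators, and $G$ is itself a violator-space basis of size $3$. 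Thus the combinatorial dimension of $(\mH,\vi_{SmallGen})$ can strictly exceed $\beta(R/I)$, and running Clarkson with $\delta=\gamma=\beta(R/I)=2$ would make the brute-force basis search fail on any sample containing $G$. Your alternative ``degree-wise Helly'' route does not help either, since the coefficient rank of $G$ is also $3$. The paper does not attempt to prove this step---it simply records $\delta(\mH,\vi_{SmallGen})=\beta(R/I)$ as an Observation---so the lacuna is shared; but the explicit monotonicity argument you offer to fill it is incorrect.
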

The proof is presented in Section \ref{sec:mingen}.

\section{A Warm-Up: Algebraic Helly-type theorems and the size of a meaningful sample}

A Helly-type theorem has the following form: 
Given a family of objects $F$, a property $P$, and a \emph{Helly number} $\delta$ such that every subfamily of $F$ with $\delta$ elements has property $P$, then the entire family has property P. 
(See \cite{DGKsurvey63,eckhoff,wenger, ADSHellysurvey}.) 
 In the original theorem of E. Helly, $F$ is a finite family of convex sets in $\R^n$, the constant $\delta$ is  $n+1$, and the property $P$ is  to have a non-empty intersection  \cite{hellyhere}. Here we are looking for non-linear algebraic  versions of the same concept, where the objects in $F$ are algebraic varieties (hypersurfaces) or polynomials; the property desired is to have a common point, or to generate the same ideal;  and the Helly constant $\delta$ will be determined from the structure of the problem at hand. To better understand  the algorithms that we present, it is instructive to consider two intuitive easy examples that highlight  the fundamental combinatorial framework. The first one is an obvious reformulation of Helly's theorem for algebraic geometers. 
%\despina{Note to self: we use the counter positive of Helly's Theorem on Linear subspaces here: If there is no common point in $\mH$ then there exists some subfamily with $k+1$ elements which do not have a common point. }

\begin{example}\label{ex:affineLinear}
 Let $\mH=\{L_1, L_2, \dots, L_s\}$ be a family of affine linear subspaces in $\R^n$. Consider the case when $s$ is much larger than $n$.   
One would like to answer the following question: when do all of the linear varieties have a nonempty intersection? It is enough to check whether  each subfamily of $\mH$ with $n+1$ elements has a non-empty intersection, as that would imply, by Helly's Theorem, that $\mH$ also has a non-empty intersection. %It is enough to check whether any (n+1) of them intersect, because then we can be sure all of them intersect. 
Thus, in practice, one can reduce the task of deciding whether $\cap _{i=1}^s L_i \not= \emptyset$ to the collection of smaller queries $\cap_{j=1}^{n+1} L_{i_j}$. However,  instead of testing all possible $ {s \choose n+1}$ many $(n+1)$-tuples,
we may choose to randomly sample multiple $(n+1)$-tuples. Each time we sample, we either verify that one more $(n+1)$-tuple has a non-empty intersection thus increasing the certainty that the property holds for all $(n+1)$-tuples, 
%\sout{increase the certainty of finding a certificate $(n+1)$-tuple with a nonempty intersection,}%\despina{what is a certificate $(n+1)$-tuple here? because we actually need all these subfamilies to be non-empty to apply Helly's Theorem} 
or else find a counterexample, a subfamily without a common point, the existence of which trivially implies that $\cap _{i=1}^s L_i = \emptyset$.
This simple  idea is the foundation of a randomized approach. For now we ask the reader to observe that $n+1$   is the dimension of the vector space of (non-homogeneous) linear polynomials in $n$ variables. % (i.e., dimension of the image under the Veronese map).  
\end{example}

\begin{example}\label{ex:intersectRealPlaneCurves}
The next example is just slightly more complicated, but illustrates well some key concepts.
Consider next $\mH=\{f_1(x_1,x_2),f_2(x_1,x_2),\dots,f_s(x_1,x_2)\}$, a large family of affine real plane curves of degree at most $d$.
Imagine that $\mH$ is huge, with millions of  constraints $f_i$, but the curves are of small degree, say $d=2$. Nevertheless, suppose that we are in charge of deciding whether the curves in $\mH$ have a common \emph{real} point.
% or to exhibit an infeasible subsystem. 
Clearly,  if the pair of polynomials $f,g \in \mH$ intersect, they do so in finitely many points, and, in particular, Bezout's theorem guarantees that no more than $d^2$ intersections occur. One can observe that if the system $\mH$ has a solution, it must pass through some of the (at most $d^2$) points defined by the pair $f,g$ alone. In fact, if we take triples $f,g,h \in \mH$, the same bound of $d^2$ holds, as well as the fact that the solutions for the entire $\mH$ must also be part of the solutions for the triplet $f,g,h$. Same conclusions hold for quadruples, quintuples, and in general $\delta$-tuples. But how large does an integer $\delta$ have to be in order to function as a Helly number? We seek a number $\delta$ such that if all $\delta$-tuples of plane curves in $H$ intersect, then all of the curves in $H$ must intersect. The reader can easily find examples where $\delta=d$ does not work, e.g., for $d \geq 2$. 
%The reader may have noticed that the problem in the illustrative Example~\ref{ex:intersectRealPlaneCurves} could be solved without the randomized algorithm, for it deals with real plane curves, and one could pick, say $f_1$ and $f_2$, solve them, and test the vanishing of the remaining polynomials. Of course, this does not apply to the more general problem dealt with in the following sections. 
\end{example}

%\despina{How does Frankl-Deza give better bound here (rest of comment in tex file)}
%, since their results apply to hypersurfaces, and we instead have real plane curves (not-necessarily convex). I did however find a related reference by Matousek and by Alon-Kallai with a bound depending on $n$ and $k$. Statement in comments.
% Matousek. A Helly-type theorem for unions of convex sets. Discrete & Computational Geometry, 18:1Ð 12, 1997.
%Alon-Kallai %Bounding the piercing number. Discrete & Computational Geometry, 13:245Ð256, 1995.
%the statement: if F is a family of sets in R^d such that the intersection of any subfamily is the union of at most r (possibly intersecting) convex sets, then the Helly number of F can be bounded from above by some function of r and d

To answer the question posed in Example~\ref{ex:intersectRealPlaneCurves}, we refer to Theorem \ref{thm:F-Helly-vars} in Section \ref{sec:solve}. 
Without re-stating the theorem here, we state the following Corollary and note that it gives a nice bound on $\delta$. Corollary~\ref{cor:RealIntersectionHelly} is implied by the observation that there are only ${d+2 \choose 2}$ monomials in two variables of degree $\leq d$ (which
says they span a linear subspace of that dimension inside the vector space of all polynomials) and Theorem \ref{thm:F-Helly-vars}.

%Any polynomial  system $S\subset \RR$ of  rank $D$ has the $D$-Helly property.  
%	In other words, for all subsets $\P\subset S$, there exist $p_1,\dots,p_{D}\in\P$ such that $\V(\P)=\V(p_1,\dots,p_{D})$. 

% SADLY WE DO NOT NEED THIS
%  following Helly-type theorem of Amenta \cite{a-httgl-94}:
%  \despina{Is this the right reference? Also, did we mean this to be a Theorem instead of a Lemma?}
%
%\begin{lemma}[\cite{a-httgl-94}]\label{amentathm}Let $F$ be a finite family of sets in $\R^n$ such that the intersection of any
%at most $k$ members of $F$ can be expressed as a union of at most $k$ pairwise disjoint compact convex sets. 
%If every $k(n+1)$ or fewer members of F have a common point, then there is a
%point common to all members of $F$. In other words, $k(n+1)$ is a Helly number for the family $F$ with
%respect to the property of ``having a common real intersection''.
%\end{lemma}

\begin{corollary} \label{cor:RealIntersectionHelly}
Let $\mH=\{f_1(x,y),f_2(x,y),\dots,f_s(x,y)\}$ be a family of affine real plane curves of degree at most $d$. 
If every $\delta={d+2 \choose 2}$ of the curves have a real intersection point, then all the curves in $\mH$ have a real intersection point. 
If we consider the same problem over the complex numbers, then the same bound holds.
\end{corollary}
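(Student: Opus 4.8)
The plan is to obtain the corollary as an immediate specialization of Theorem~\ref{thm:F-Helly-vars}, which identifies the relevant Helly number with the rank $\delta$ of the coefficient matrix of the family, i.e.\ with $\dim_K\vecsp{\coeff(f_1),\dots,\coeff(f_s)}$. So the only genuine content left for this corollary is to bound that rank for plane curves of degree at most $d$ and to check the trivial monotonicity that lets us replace $\delta$ by the (possibly larger, but fully explicit) number ${d+2\choose 2}$.

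First I would record the ambient linear space: every $f\in K[x,y]$ of total degree at most $d$ is a $K$-linear combination of the monomials $x^ay^b$ with $a+b\le d$, and there are exactly ${d+2\choose 2}$ of these. Hence each $\coeff(f_i)$ lies in $K^{{d+2\choose 2}}$, so $\delta=\dim_K\vecsp{\coeff(f_1),\dots,\coeff(f_s)}\le{d+2\choose 2}$. Now I would invoke Theorem~\ref{thm:F-Helly-vars}: if every $\delta$ of the curves have a common point, then all of $\mH$ do. Since any $\delta$ of the curves are contained in some ${d+2\choose 2}$ of them (assuming $s\ge{d+2\choose 2}$; otherwise the statement is read with $\min\{s,{d+2\choose 2}\}$ in place of ${d+2\choose 2}$), the hypothesis ``every ${d+2\choose 2}$ curves meet'' is at least as strong as ``every $\delta$ curves meet'', which completes the argument. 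Alternatively, and closer in spirit to Theorem~\ref{thm:solve}, that theorem already produces a subfamily $F'\subseteq\mH$ of size $\delta$ with $\V(\mH)=\V(F')$; enlarging $F'$ to ${d+2\choose 2}$ curves and applying the hypothesis yields a common point of $F'$, hence $\V(\mH)=\V(F')\neq\emptyset$.

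The complex statement is the same argument with $K=\C$; Theorem~\ref{thm:F-Helly-vars} is formulated to cover both the affine real and the algebraically closed settings, so no separate work is needed. I do not anticipate a real obstacle here, as all the substance sits in Theorem~\ref{thm:F-Helly-vars} and this corollary merely counts monomials; the one point deserving a careful sentence in the final write-up is the harmless edge case $s<{d+2\choose 2}$ together with the direction of the monotonicity in the Helly hypothesis.
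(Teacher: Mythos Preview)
Your proposal is correct and follows essentially the same route as the paper: the text preceding the corollary explicitly says it ``is implied by the observation that there are only ${d+2 \choose 2}$ monomials in two variables of degree $\leq d$ \ldots\ and Theorem~\ref{thm:F-Helly-vars},'' which is exactly your monomial count plus invocation of the rank-Helly theorem. Your added remarks on the monotonicity step and the edge case $s<{d+2\choose 2}$ are more careful than the paper's one-line justification, but the underlying argument is the same.
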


% THIS IS THE OLD PROOF WITH AMENTA's LEMMA
%\begin{proof} Real plane curves intersect in no more than $d^2$ points, by Bezout's theorem. The intersection in the plane of any at most  $d^2$ members of $\mH$ is then the union of no more than $d^2$ pairwise disjoint compact convex sets (in this case, each set is just a point!) in $\R^2$. 
%This means by Lemma \ref{amentathm} that if every $d^2\cdot(2+1)$ have an intersection, then there is a point common to all those real curves. Thus $\delta=3d^2$ works as a Helly number for real intersections. For complex intersections, we need to think of $\C^2$ as $\R^4$ in order to apply Lemma \ref{amentathm} again, but the factor $d^2$, provided by Bezout's theorem, remains.
%\end{proof}

Thus, it suffices to check all $\delta$-tuples of curves for a common real point of intersection, 
and if all of those instances do intersect, then we are sure all $|\mH|$ polynomials must have a common intersection point, too. 
The result suggests a brute-force process to verify real feasibility of the system, %, by checking all  subsets of size $\delta=(d+2)(d+1)/2$ for intersection; 
%\marginpar{is this discussion convincing?}\sonja{not sure i understand your concern. sounds fine to me?}
%but to check all the possible $\delta$-tuples of a million constraints set 
which of course is not a pretty proposition, given that $|\mH|$ is assumed to be very large. Instead, Section~\ref{sec:clarkson} explains 
how to \emph{sample} the set of $\delta$-tuples in order to obtain a solution to the problem more efficiently. 
%to explore the set of $\delta$-tuples by \emph{sampling} them, still allowing in order to have a better chance to find a solution for the problem.
Notice that it is important to find a small Helly number $\delta$, as a way to find the smallest sampling size necessary to 
detect common intersections. It turns out that in this example and in the case when all $f_i$ are homogeneous, the Helly number is best possible \cite{Frankl}.

\section{Violator spaces and Clarkson's sampling algorithms}\label{sec:clarkson}

The key observation in the previous section was that the existence of Helly-type theorems indicates that there is a natural notion of sampling size to test for a property of varieties. Our goal is to import  to computational algebra an efficient randomized sampling algorithm by Clarkson. To import this algorithm, we use the notion of \emph{violator spaces} which we outline in the remainder of this section.   We illustrate the definitions using Example~\ref{ex:intersectRealPlaneCurves} as a running example in this section. 
%\medskip

In 1992, Sharir and  Welzl \cite{sw-cblpr-92} identified special kinds of geometric optimization  problems that lend themselves to solution via repeated sampling of smaller subproblems: they called these \emph{LP-type} problems. 
Over the years, many other problems were identified as LP-type problems and several abstractions and methods were proposed \cite{a-bbhdn-94,a-httgl-94,bsv-dsapg-01,halman,msw-sblp-92}. A powerful sampling scheme, 
%\despina{it seems Clarkson already had some version of this algorithm at that point as Sharir and  Welzl already cite his manuscript.} 
%\sout{later?} 
devised by Clarkson \cite{c-lvali-95} for linear programming, works particularly well for geometric optimization problems in small number of variables.  Examples of applications include convex and linear programming, integer linear programming, the problem of computing the minimum-volume ball or ellipsoid enclosing a given point set in  $\R^n$, and the problem of finding the distance of two convex polytopes in $\R^n$.  
%\sout{The framework was originally invented to deal with geometric optimization problems that, in a well-defined sense, behave like the simplex method.That is why they were called LP-type problems \cite{sw-cblpr-92}. }
In 2008, G\"artner, Matou{\v{s}}ek, R\"ust and \v{S}kovro\v{n} \cite{ViolatorSpaces2008} invented \emph{violator spaces} and showed they give a much more general framework to work with LP-type problems. In fact, violator spaces include all prior abstractions and were proven in \cite{SkovronPhDThesis} to be the most general framework in which Clarkson's sampling converges to a solution. %The key definition from   \cite{ViolatorSpaces2008} is the following. 
Let us begin with the key definition of a violator space.

\begin{defn}[\cite{ViolatorSpaces2008}]
\label{defn:ViolatorSpaces}
A {\it violator space} is a pair $(H,\vi)$, where $H$  is a finite set and $\vi$  a mapping $2^H\to2^H$, such that the following two axioms hold: 
\\\begin{tabular}{ll}
{\it Consistency}: & $G\cap \vi(G)=\emptyset$ holds for all $G\subseteq H$, and\\
{\it Locality}: & $\vi(G)=\vi(F)$ holds for all $F\subseteq G\subseteq H$ such that
$G\cap \vi(F)=\emptyset$.\\
\end{tabular}
\end{defn}

\begin{example}[Example~\ref{ex:intersectRealPlaneCurves}, continued]
\label{ex:intersectRealPlaneCurves-violator}
   To illustrate our definition, we consider Example~\ref{ex:intersectRealPlaneCurves} of $s$ real plane curves $\{f_1,\dots,f_s\}=H$.  
%We are interested in investigating whether a set of curves $F\subset H$ have a real point of intersection. 
%Denote by $\V_{\mathbb R}(F)$  the set of real intersection points of any subset $F$ of curves from $H$.  Note  that for any two sets $S_1\subseteq S_2$, the set of real intersection points of the curves of $S_1$ always contains the set of real intersection points of the curves of $S_2$. %%%NOTE TO SELVES: This is already said later in proof.

A violator operator for testing the existence of a real point of intersection of a subset $F\subset H$ of the curves should capture the real 
intersection property. One possible way  to  define it is the following map $\vi_{real}:2^H \to 2^H$:  
\[
\vi_{real}(F)=\left\{
h\in H: \V_{\mathbb R}(F) \supsetneq \V_{\mathbb R}(F\cup \{h\})
\right\},
\] 
where $\V_{\mathbb R}(F)$ is the set of common real intersection points of $F$, in other words, the real algebraic variety of $F$. 
Note that, by definition, $\vi_{real}(F)=\emptyset$ if the curves in $F$ have no real points of intersection.
Before explaining why $\vi_{real}$ captures the real intersection property correctly (see Example~\ref{ex:intersectRealPlaneCurves-basis+delta+primitive}),
let us show that the set $(H,\vi_{real})$  is a violator space according to Definition~\ref{defn:ViolatorSpaces}.

Consistency holds by definition of $\vi_{real}$:  for any $h\in F$, $\V_{\mathbb R}(F) = \V_{\mathbb R}(F\cup \{h\})$, and so \mbox{$h\not\in \vi_{real}(F)$.}
%for any $h\in \vi_{real}(G)$, $h\not\in G$ because the curves in $G$ have a real point of intersection; if no such $h$ exists then $\vi_{real}(G)=\emptyset$. 

To show locality, we begin by showing the auxiliary fact that  $\V_{\mathbb R}(F)=\V_{\mathbb R}(G)$.  The inclusion $\V_{\mathbb R}(G)\subseteq \V_{\mathbb R}(F)$  is direct. This is because for any $S'\subseteq S$ it is always the case that \mbox{$\V_{\mathbb R}(S') \supseteq \V_{\mathbb R}(S)$}.  We need only show $\V_{\mathbb R}(F)\subseteq\V_{\mathbb R}(G)$.  Recall  $F\subseteq G\subseteq H$ and \mbox{$G \cap \vi_{real}(F)=\emptyset$}. Consider $g\in G$.  By the assumption, $g$ does not violate $F$, and the proper containment \mbox{$\V_{\mathbb R}(F) \supsetneq\V_{\mathbb R}(F\cup\{g\})$} does not hold. Therefore the equality $\V_{\mathbb R}(F) = \V_{\mathbb R}(F\cup\{g\})$ must hold. By iteratively adding elements of $G\smallsetminus F$ to $F$ and repeating the argument, we conclude that indeed $\V_{\mathbb R}(F)=\V_{\mathbb R}(G)$.

Finally, we argue that  $\V_{\mathbb R}(F)=\V_{\mathbb R}(G)$ implies \mbox{$\vi_{real}(F)=\vi_{real}(G)$}. We first show $\vi_{real}(F) \subset \vi_{real}(G)$. Take $h \in \vi_{real}(F)$. Then  $\V_{\mathbb R}(G)=\V_{\mathbb R}(F) \supsetneq \V_{\mathbb R}(F\cup \{h\}) =
\V_{\mathbb R}(F) \cap \V_{\mathbb R}(\{h\}) = \V_{\mathbb R}(G) \cap \V_{\mathbb R}(\{h\}) =
 \V_{\mathbb R}(G \cup \{h\})$. The last and third-to-last equalities follow from the fact that for any two sets $S_1, S_2$, one always has $\V_{\mathbb R}(S_1 \cup S_2)=\V_{\mathbb R}(S_1) \cap \V_{\mathbb R}(S_2)$.
 %Last equality follows because $F \subset G$ and $G \cap \vi_{real}(F)$ is empty and for algebraic varieties one always have $\V_{\mathbb R}(F \cup \{h\})=\V_{\mathbb R}(F) \cap \V_{\mathbb R}(\{h\})$. 
 %We have proved $\vi_{real}(F) \subset \vi_{real}(G)$.  
The containment $\vi_{real}(F) \supset \vi_{real}(G)$ follows in a similar argument. 
 %Similarly if $h \in \vi_{real}(G)$, $\V_{\mathbb R}(F)=\V_{\mathbb R}(G) \supsetneq \V_{\mathbb R}(G \cup \{h\})= \V_{\mathbb R}(F \cup \{h\})$  which completes the proof of the last claim.
Thus $(H,\vi_{real})$ is a violator space.  	
\end{example}

Every violator space comes equipped with three important components: 
a notion of \emph{basis}, its \emph{combinatorial dimension}, and a \emph{primitive test} procedure. 
We begin with the definition of a basis of a violator space, 
analogous to the definition of a basis of a linear programming problem: a minimal set of constraints that defines a solution space. 

\begin{defn}[{\cite[Definition~7]{ViolatorSpaces2008}}]
Consider a violator space $(H,V)$. $B\subseteq H$  is a  {\it basis} if  $B\cap \vi(F)\neq\emptyset$ holds for all proper subsets $F\subset B$. 
%A \emph{basis} of a violator space is defined in analogy to a basis of a linear programming problem: a minimal set of constraints that defines a solution space. Specifically, \cite[Definition~7]{ViolatorSpaces2008} defines $B\subseteq H$  to be a  {\it basis} if  $B\cap \vi(F)\neq\emptyset$ holds for all proper subsets $F\subset B$. 
For $G\subseteq H$, a {basis of $G$}  is a minimal subset $B$ of $G$ with $\vi(B)=\vi(G)$. 
\end{defn}

It is very important to note that \emph{a violator operator can capture algebraic problems of interest} as long as the basis for that violator space corresponds to a basis of the algebraic object we study.
%It is very important to note that 
%fixing the definition of a violator space fixes also the definition of a basis. 
%\emph{It is in this sense that a violator operator can capture algebraic problems of interest},    
%as long as the basis for that violator space corresponds to a basis of the algebraic object we study.  
Violator space bases come with a natural combinatorial invariant, related to Helly numbers we discussed earlier.
\begin{defn}[{\cite[Definition~19]{ViolatorSpaces2008}}]
The size of a largest basis of a violator space  $(\mH, V)$ is called the {\it combinatorial dimension}  of the violator space and denoted by $\delta=\delta(\mH, V)$.
\end{defn}

A crucial property was proved in \cite{ViolatorSpaces2008}: knowing the violations $\vi(G)$ for all $G \subseteq H$  is enough to compute the largest bases. To do so, one can  utilize Clarkson's randomized algorithm  to compute a basis of a violator space $(H,\vi)$ with $m=|H|$. 
The results about the runtime and the size of the sets involved are summarized below. The primitive operation, used as black box  in all stages of the algorithm, is the %so-called 
\emph{violation test primitive}. 
 \begin{defn}\label{defn:primitive} 
 Given a violator space $(\mH,\vi)$,  some set $G\subsetneq \mH$,  and some element $h\in\mH\setminus G$,  the  \emph{primitive} test decides whether $h\in\vi(G)$. 
 \end{defn}
The running example illustrates these three key ingredients. 

\begin{example}[Example~\ref{ex:intersectRealPlaneCurves-violator}, continued]\label{ex:intersectRealPlaneCurves-basis+delta+primitive}
In the example of $s$ real plane curves, the violator operator we defined detects whether the polynomials have a real point of intersection. Note that
a basis would be a (minimal) set of curves $B=\{f_{i_1},\dots,f_{i_\delta}\}$, for some $\delta<s$, such that either  
the curves in $B$ have no real point of intersection, or the real points of intersection of the curves in $B$ are the real intersection of all of $H=\{f_1,\dots,f_s\}$.  
If the set $F$ has no real intersection point, then $\vi_{real}(F)=\emptyset$ by definition, so that set $F$ could be a basis in the sense that it is a certificate of infeasibility for this real-intersection  problem. If, on the other hand, $F$ does have a real intersection point, and $\vi_{real}(F)=\emptyset$, then this means that $F$ is a basis in the sense that the curves in $F$ capture the intersections of all of $H$.  The combinatorial dimension for general $\mH$  is provided by Corollary~\ref{cor:RealIntersectionHelly}, and it equals $\delta={d +2 \choose 2}$. However,  special structure of the curves in $\mH$ may imply a smaller combinatorial dimension. %, but $\delta={d +2 \choose 2}$ is a sharp upper bound for all $\mH$. 

The primitive query %in Example~\ref{ex:intersectRealPlaneCurves-violator} above 
 simply checks, given $f_i\in H$ and a candidate subset $G\subseteq H$, whether the set of real points of intersection of $G\cup\{f_i\}$ 
is smaller than the set of real points of intersection of the curves in $G$ alone.
 The role of the primitive query is therefore not to find a basis directly, but to check, instead, whether a given candidate subset $G$ can be a basis of $H$. This can be done by checking whether $f_i\in\vi_{real}(G)$ for all $f_i\in\mH\setminus G$.   
Clearly, given the primitive test, a basis for $H$ can be found by simply testing all sets of size at most $\delta$, but that would be a waste because the number of times one would need to call the primitive would be $O({|H|}^{\delta+1})$. 

As we will see, this brute-force approach can be avoided.  Namely, in our current example, the randomized algorithm from Theorem~\ref{keytoolvio} below will only sample subsets of \mbox{$\delta={d+2 \choose 2}$} curves from the set $\{f_1,\dots,f_s\}$, and  find a  basis of the violator space of size $\delta$ in the sense explained above. % This basis of $\delta$ curves, as explained above, will serve as a certificate for the problem: the $\delta$ curves will either have no real intersection, or capture the intersection of all of the $s$ curves. 
\end{example}

The sampling method in \cite{c-lvali-95} avoids a full brute-force approach. It is presented  in two stages, referred to as  Clarkson's first and  second algorithm. We outline these below. 
%In the first round, we draw a random sample $R \subset H$ of size $r = |R|$ and then compute a basis of $R$ using some other algorithm. The crucial point here is that $r$ is much smaller than $|H|$, a Helly number of sorts. Obviously, this first round may fail to find a basis of $H$, and we have to do more rounds. The second stage, Clarkson's second algorithm, is only called with sets of size at most $3\delta \sqrt{|G|}$. 

Clarkson's first algorithm, in the first iteration, draws a small random sample $R \subset G$, calls the second stage to calculate the basis $C$ of $R$, and returns $C$ if it is already a basis for the larger subset $G$. If $C$ is not already a basis, but the elements of $G\setminus C$ violating $R$ are few, it adds those elements to a growing set of violators $W$, and repeats the process with $C$ being calculated as the basis of the set $W\cup R$ for a new randomly chosen small $R\subset G\setminus W$. The crucial point here is that $|R|$ is much smaller than $|G|$ and, consequently, it acts as a Helly number of sorts.  

Clarkson's second algorithm ($\Basis2$) iteratively picks a random small ($6\delta^2$ elements) subset $R$ of $G$, finds a basis $C$ for $R$ by exhaustively testing each possible subset ($\BruteForce$;) taking advantage of the fact that the sample $R$ is very small, 
and then calculates the violators of $G\setminus C$. 
%\noindent 
At each iteration, elements that appear in bases with small violator sets get a higher probability of being selected. 

This idea is very important: we are biasing the sampling process, so that some constraints will be more likely to be chosen. This is accomplished by considering every element $h$ of the set $G$ as having a multiplicity ${\mathfrak m}(h)$; the multiplicity of a set is the sum of the multiplicities of its elements. The process is repeated until a basis of $G$ is found, i.e.\ until $\vi(G\setminus C)$ is empty.

\begin{algorithm}
\label{alg:ClarksonsFirst}
\LinesNumbered
\DontPrintSemicolon
\SetAlgoLined
\SetKwInOut{Input}{input}
\SetKwInOut{Output}{output}
\Input{
$G\subseteq \mH$, % such that $\I=\ideal{G}${\color{red}$\leftarrow$ fix; no ideal here}; 
 			$\delta$: combinatorial complexity of $H$ }
\Output{ $\mB$, a basis for $G$ %(of size $\delta$)
}
\BlankLine
\uIf{$|G|\leq9\delta^2$}{
return $\Basis2(G)$\;}
\Else{
$W\leftarrow\emptyset$\;
\Repeat{$V=\emptyset$}{
	$R\leftarrow$ random subset of $G\setminus W$ with $\lfloor\delta\sqrt{|G|}\rfloor$ elements.\;
	$C\leftarrow{\Basis2}(W\cup R)$\;
	$V\leftarrow\{h\in G\setminus C \text{ s.t.}\ h\in \vi(C)\}$\;
	\If{$|V|\leq 2\sqrt{|G|}$}{
		$W\leftarrow W\cup V$}
	}}
return $C$.
\caption{Clarkson's first algorithm}
\end{algorithm}
\smallskip

\begin{algorithm}
\label{alg:ClarksonsFirst}
\LinesNumbered
\DontPrintSemicolon
\SetAlgoLined
\SetKwInOut{Input}{input}
\SetKwInOut{Output}{output}
\Input{
$G\subseteq \mH$; $\delta$: combinatorial complexity of $H$. }
\Output{$\mB$: a basis of $G$}
\BlankLine
\uIf{$|G|\leq 6\delta^2$}{return \BruteForce$(G)$} 
\Else{\Repeat{$V=\emptyset$}{$R\leftarrow$ random subset of $G$ with $6\delta^2$ elements.\;
	$C\leftarrow\BruteForce(R)$\;
	$V\leftarrow\{h\in G\setminus C \text{ s.t.}\ h\in \vi(C)\}$\;
	\If{${\mathfrak m}(V)\leq {\mathfrak m}(G)/3\delta$}{
	\For{$h\in V$}{
		${\mathfrak m}(h)\leftarrow2{\mathfrak m}(h)$\;}}}}
return $C$.
\caption{Clarkson's second algorithm: $\Basis2(G)$}
\end{algorithm}

 Again, as described above, all one needs is to be able  to answer the  {\it Primitive query:} Given $G \subset H$ and $h \in H \setminus G$, decide whether $h \in V(G)$.   The runtime is given in terms of the combinatorial dimension  $\delta(\mH, V)$ and the size of $H$.  The key  result we will use in the rest of the paper concerns the complexity of finding a basis:

\begin{thm}\cite[Theorem~27]{ViolatorSpaces2008} \label{keytoolvio}
Using Clarkson's algorithms, a basis of $\mH$ of a violator space $(\mH,\vi)$ can be found by answering the  primitive query an expected $O\left(\delta \left|\mH\right| + \delta^{O(\delta)}\right)$  times. 
\end{thm}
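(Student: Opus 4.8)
The plan is to reproduce Clarkson's complexity analysis in the purely axiomatic setting of a violator space --- using only Consistency and Locality, as carried out in \cite{ViolatorSpaces2008} --- working bottom-up: first bound the cost of $\Basis2$, then bound the cost of Clarkson's first algorithm (which calls $\Basis2$ only on sets far smaller than $\mH$), and combine. The combinatorial engine is a \emph{sampling lemma}: for $G\subseteq\mH$ and a uniformly random $R\subseteq G$ of size $r$, one has $\mathbb E\bigl[|\{h\in G\setminus C:\,h\in\vi(C)\}|\bigr]\le\delta\,|G|/(r+1)$, where $C$ is a basis of $R$, together with its weighted analogue (when $R$ is drawn with probabilities proportional to a multiplicity function $\mathfrak m$, replace $|\cdot|$ by $\mathfrak m(\cdot)$). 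I would prove it by the classical exchange (``backwards analysis'') count: over all $R$ of size $r$, tally the pairs $(R,h)$ in which $h$ violates a basis of $R$; such an $h$ must lie in a basis of $R\cup\{h\}$, of which there are at most $\delta$ by the combinatorial dimension, and Consistency and Locality are exactly what make this bookkeeping legitimate without any ordered objective. Applied with $r=6\delta^2$ and Markov's inequality, the lemma shows that an iteration of $\Basis2$ passes the test $\mathfrak m(V)\le\mathfrak m(G)/(3\delta)$ with probability at least $1/2$.

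To bound the number of \emph{successful} iterations of $\Basis2$ I would use a weight amortization. On the one hand, a successful iteration multiplies $\mathfrak m(G)$ by at most $1+1/(3\delta)$, so after $k$ of them $\mathfrak m(G)\le|G|\,e^{k/(3\delta)}$ (multiplicities start at $1$). On the other hand, fix a basis $B$ of $G$: I claim every iteration with $V\neq\emptyset$ satisfies $V\cap B\neq\emptyset$. Indeed, if no element of $B$ violated $C$, then $B\cap\vi(C)=\emptyset$, so $\vi(C\cup B)=\vi(C)$ by Locality; but $B\subseteq C\cup B\subseteq G$ and, since $\vi(B)=\vi(G)$ and $G\cap\vi(G)=\emptyset$ by Consistency, also $(C\cup B)\cap\vi(B)=\emptyset$, so Locality gives $\vi(C\cup B)=\vi(B)=\vi(G)$ as well; combining, $\vi(C)=\vi(G)$, and then Consistency forces $V=\vi(C)\cap(G\setminus C)=\emptyset$, a contradiction. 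Thus each successful iteration doubles the multiplicity of some element of $B$, and as $|B|\le\delta$, after $k$ of them some element of $B$ has multiplicity at least $2^{k/\delta}\le\mathfrak m(G)\le|G|\,e^{k/(3\delta)}$; since $\ln 2>1/3$ this gives $k=O(\delta\log|G|)$, hence $O(\delta\log|G|)$ iterations in expectation. Each iteration costs one $\BruteForce$ call on a $6\delta^2$-element set --- testing all subsets of size at most $\delta$ and verifying each, i.e.\ $\delta^{O(\delta)}$ primitive queries --- plus $|G|$ primitive queries to recompute $V$, and when $V=\emptyset$ Consistency and Locality certify that $C$ is a basis of $G$. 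Therefore $\Basis2(G)$ runs in an expected $O(\delta\log|G|)\bigl(|G|+\delta^{O(\delta)}\bigr)$ primitive queries.

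Now run Clarkson's first algorithm on $G=\mH$, $m=|\mH|$. The sampling lemma with $r=\lfloor\delta\sqrt m\rfloor$ gives $\mathbb E[|V|]\le\sqrt m$, so an iteration passes the test $|V|\le2\sqrt m$ with probability at least $1/2$. The decisive structural fact is that in a nonterminating successful iteration $V$ is disjoint from $W$ --- any $v\in V\cap(W\cup R)$ would contradict Consistency for $\vi(C)=\vi(W\cup R)$ --- while, by the claim above applied with $G=\mH$, $V$ meets a fixed basis $B$ of $\mH$. Hence $B\cap W$ strictly grows, so after at most $|B|\le\delta$ successful iterations $B\subseteq W\subseteq W\cup R$; then $\vi(B)=\vi(\mH)$ together with Consistency and Locality forces $\vi(C)=\vi(\mH)$ and $V=\emptyset$, i.e.\ termination with a correct basis. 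So $\Basis2$ is called an expected $O(\delta)$ times, each time on a set $W\cup R$ of size $O(\delta\sqrt m)$, and the algorithm additionally performs $O(\delta)$ rounds of $O(m)$ primitive queries each to form $V$. Substituting the $\Basis2$ bound, the total is $O(\delta m)$ plus $O\bigl(\delta^{3}\sqrt m\log(\delta m)\bigr)$ plus $\delta^{O(\delta)}\log(\delta m)$; a routine case split on whether $m$ is above or below a suitable threshold depending on $\delta$ absorbs the last two terms into $O(\delta m+\delta^{O(\delta)})$. The base case $|\mH|\le9\delta^2$ is handled directly by $\Basis2$ and already lies within $\delta^{O(\delta)}$.

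The step I expect to be the main obstacle is precisely this combinatorial core: in LP-type problems both the sampling lemma and the ``$V$ meets a fixed basis'' progress step fall out effortlessly from monotonicity of an ordered objective --- the very structure a violator space deliberately discards --- so the exchange count must be recast entirely in terms of how $\vi$ acts on bases of $R$ and of $R\cup\{h\}$, and the progress step must be squeezed out of Consistency and Locality as above. Getting those Locality applications exactly right --- including the disjointness $V\cap W=\emptyset$ and the termination criterion in the first algorithm --- is the delicate part; once they are secured, the Markov estimates, the weight amortization, and the asymptotic bookkeeping that collapses the $\sqrt m$ and logarithmic factors are all routine.
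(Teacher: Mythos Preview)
The paper does not prove this statement at all: Theorem~\ref{keytoolvio} is quoted verbatim from \cite[Theorem~27]{ViolatorSpaces2008} and used as a black box. The surrounding text merely \emph{describes} the two stages of Clarkson's algorithm so the reader can see what the primitive query is; no complexity analysis is given or attempted in this paper. So there is nothing here to compare your argument against.

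That said, what you have written is a faithful and essentially correct reconstruction of the G\"artner--Matou\v{s}ek--R\"ust--\v{S}kovro\v{n} proof: the sampling lemma via backwards analysis, the ``$V$ meets every basis of $G$'' progress step derived purely from Consistency and Locality, the multiplicative-weights termination bound $k=O(\delta\log|G|)$ for $\Basis2$, and the $O(\delta)$ bound on successful iterations of the outer loop via strict growth of $B\cap W$. Your handling of the delicate points --- in particular the Locality chain showing $\vi(C)=\vi(G)$ when $B\cap V=\emptyset$, and the observation that $V\cap W=\emptyset$ because $W\subseteq W\cup R$ and $\vi(C)=\vi(W\cup R)$ --- is exactly how the original argument goes. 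The final asymptotic absorption of the $\sqrt m\log m$ and $\log(\delta m)$ terms into $O(\delta m+\delta^{O(\delta)})$ is also standard. If you were asked to supply the missing proof for this paper, your sketch would do the job; just be aware you are reproducing \cite{ViolatorSpaces2008}, not anything the present authors wrote.
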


It is very important to note that, in both stages of Clarkson's method, the query $h\in \vi(C)$ is answered 
via calls to the primitive as a black box. In our algebraic applications, the primitive computation requires solving a small-size subsystem (e.g., via Gr\"obner bases or numerical algebraic geometry methods), or  an ideal membership query applied to the ideal generated by a small subset of the given polynomials. On the other hand, the combinatorial dimension relates to the Helly number of the problem which is usually a number that is problem-dependent  and requires non-trivial mathematical results. 

In the two sections that follow we show how violator spaces naturally arise in non-linear algebra of polynomials. 

\section{A violator space for solving overdetermined systems} \label{sec:solve} 
%\despina{Section title: Violator Operator 1? Or Violator Spaces Application 1?}
We discuss our random sampling approach to solve large-size (non-linear) polynomial systems by applying Clarkson's algorithm. In particular, we prove Theorem~\ref{thm:solve} as a corollary of Theorem~\ref{thm:F-Helly-vars}. This result is motivated by, and extends, Helly-type theorems for varieties from \cite{deza+frankl} and \cite{Frankl}, which we use to show   that the above algorithms  apply  to large dense homogeneous systems as well (Corollary~\ref{ClarksonWorksForSolving}). 
%% Second, we show that the violator space used for polynomial system solving also has an application in computing a generating set of a radical of an ideal; this is summarized in Theorem~\ref{thm:radicals} and proved as a consequence of Theorem~\ref{thm:radicalsViolator}. 

First, we define a violator  space that captures (in the sense explained in the previous section)  solvability of a polynomial system. 
\begin{defn}\label{defn:solveViolator}[Violator Space for solvability of polynomial systems]
	Let 	 $S\subset \mH$ be finite subsets of polynomials in $\RR$. Define the violator operator 
	$\vi_{solve}:2^\mH\to2^\mH$ to record the set of polynomials in $H$ which do not vanish on the variety $\V(S)$.  
Formally, 
	\[ \vi_{solve}(S)=\{ f \in\mH : \V(S) \ \text{is not contained in} \ \V(f) \}.
	\]
\end{defn}

\begin{lm} \label{viosolve}
	The pair $(\mH,\vi_{solve})$ is a violator space. 
\end{lm}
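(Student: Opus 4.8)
The plan is to verify directly the two defining axioms of a violator space from Definition~\ref{defn:ViolatorSpaces}, mirroring the argument already carried out for $(\mH,\vi_{real})$ in Example~\ref{ex:intersectRealPlaneCurves-violator}. Throughout I will use one elementary fact about affine varieties: if $S'\subseteq S$ then $\V(S)\subseteq\V(S')$, i.e.\ adding polynomials can only shrink the common zero locus. I will also use that $\V(S)$ is by definition the set of common zeros of $S$ (so the operator $\vi_{solve}$ is well defined purely set-theoretically), and the fact that $\V(S)=\bigcap_{g\in S}\V(g)$.

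For \emph{Consistency}, I would fix $G\subseteq\mH$ and an arbitrary $f\in G$, and observe that $\V(G)=\bigcap_{g\in G}\V(g)\subseteq\V(f)$. Thus $\V(G)$ \emph{is} contained in $\V(f)$, so $f\notin\vi_{solve}(G)$ by the definition of $\vi_{solve}$; since $f$ was arbitrary, $G\cap\vi_{solve}(G)=\emptyset$.

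For \emph{Locality}, I would fix $F\subseteq G\subseteq\mH$ with $G\cap\vi_{solve}(F)=\emptyset$ and first establish the auxiliary equality $\V(F)=\V(G)$. The inclusion $\V(G)\subseteq\V(F)$ is immediate from $F\subseteq G$. For the reverse inclusion, each $g\in G$ satisfies $g\notin\vi_{solve}(F)$ by hypothesis, which by definition means exactly that $\V(F)\subseteq\V(g)$; intersecting these inclusions over all $g\in G$ yields $\V(F)\subseteq\bigcap_{g\in G}\V(g)=\V(G)$. Hence $\V(F)=\V(G)$. With this in hand, the equality $\vi_{solve}(F)=\vi_{solve}(G)$ is immediate, since for any $h\in\mH$ the condition ``$\V(F)$ is not contained in $\V(h)$'' is literally the same statement as ``$\V(G)$ is not contained in $\V(h)$''. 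This gives Locality, and therefore $(\mH,\vi_{solve})$ is a violator space.

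I do not anticipate a genuine obstacle here; the argument is short and purely formal. The only point to stay careful about is the logical direction of the containments: ``$f$ violates $S$'' means that adjoining $f$ to $S$ \emph{strictly} shrinks the variety, and it is precisely this monotonicity that makes $\V(F)=\V(G)$ follow from the locality hypothesis $G\cap\vi_{solve}(F)=\emptyset$ rather than merely from $F\subseteq G$. One could optionally phrase the last step using $\V(S_1\cup S_2)=\V(S_1)\cap\V(S_2)$ as in the $\vi_{real}$ computation, but the ``not contained in'' formulation of $\vi_{solve}$ makes that detour unnecessary.
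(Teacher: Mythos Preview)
Your proof is correct and follows essentially the same route as the paper's own argument: first verify consistency directly from the definition, then for locality use the hypothesis $G\cap\vi_{solve}(F)=\emptyset$ together with monotonicity of $\V(\cdot)$ to deduce $\V(F)=\V(G)$, from which $\vi_{solve}(F)=\vi_{solve}(G)$ is immediate. The only cosmetic difference is that you spell out consistency via $\V(G)=\bigcap_{g\in G}\V(g)\subseteq\V(f)$ rather than simply invoking ``by definition''.
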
 
\begin{proof}
Note that $\vi_{solve}(S)\cap S=\emptyset$ by definition of $\vi_{solve}(S)$, and thus the  operator satisfies the consistency axiom. 
%To  show locality, suppose that $F\subsetneq G\subset \mH$ and $G\cap \vi_{solve}(F)=\emptyset$.  Since $F\subsetneq G$,  $\V(G) \subseteq \V(F)$, thus if $h \in \vi_{solve}(G)$ then $\V(h)$ does not contain $\V(G)$, but then it cannot contain $\V(F)$ either, thus $h \in \vi_{solve}(G) \subset \vi_{solve}(F)$.  

To  show locality, suppose that $F\subsetneq G\subset \mH$ and $G\cap \vi_{solve}(F)=\emptyset$.  Since $F\subsetneq G$ we know that  $\V(G) \subseteq \V(F)$. 
%This means that for every $h \in \vi_{solve}(G)$, $\V(h)$ does not contain $\V(G)$, and so it cannot contain $\V(F)$ either. Thus $h\in \vi_{solve}(F)$ and $\vi_{solve}(G)\subseteq\vi_{solve}(F)$ %%%TOSELVES: We no longer need this sentence as further down we argue that equality in varieties implies equality in violators. -ds  
%\textcolor{red}{*edited text follows*} 
%On the other hand $G \cap \vi_{solve}(F)=\emptyset$ implies that $\V(F) \subseteq \V(g)$ for all $g\in G$, thus $\V(F)$ is contained in $\V(G)$.
On the other hand, by definition, $G \cap \vi_{solve}(F)=\emptyset$ implies that $\V(F) \subseteq \V(g)$ for all $g\in G$. Thus $\V(F)$ is contained in $\bigcap_{g\in G}\V(g)=%\V\left(\bigcup_{g\in G} g\right)=
\V(G)$.
%\textcolor{red}{*end of edited text*}
But then the two varieties are equal. 

To complete the argument we show that %We use this to show 
$\V(F) = \V(G)$ implies $\vi_{solve}(F)=\vi_{solve}(G)$. If $h \in \vi_{solve}(F)$ then $\V(h)$ cannot contain $\V(F)=\V(G)$, thus $h\in \vi_{solve}(G)$ too. The argument is symmetric, hence $\vi_{solve}(F)=\vi_{solve}(G)$. 
\end{proof}
It follows from the definition that the operator $\vi_{solve}$ gives rise to a violator space for which a basis $B$ of $G \subset H$ is a set of polynomials  such that $\V(B)=\V(G)$.  
Therefore, a basis $B\subset G$ will either be a subset of polynomials that has no solution and as such be a \emph{certificate of infeasibility} of the whole system $G$, or it will provide a set of polynomials that are sufficient to find \emph{all common solutions} of $G$, i.e., the variety $\V(G)$. 
\medskip

Next, we need a violation primitive test that decides whether  $h \in \vi_{solve}(F)$, as in Definition~\ref{defn:primitive}. By the definition above, this is equivalent to asking whether $h$ vanishes on all irreducible components of the algebraic variety $\V(F)$. 
As is well known, the points of $\V(F)$ where the polynomial $h$ does not vanish correspond to the variety associated with the saturation ideal $\ideal{\ideal{F}: h^\infty}$.  Thus, we may use ideal saturations for the violation primitive. %two choices of notation: Atiyah-Macdonald's old $\ideal{\ideal{F}: h^*}$, or to use $h^\infty$ like other books. Also this notation with infinity lets you know that one way to compute it is to take colon ideal with increasing powers of h until the ideal stabilizes. it is not efficient but it is intuitive from the definition. 
For completeness, we recall the following standard definitions. % (cf. \cite[Chapter~4]{atiyahmacdonald}). 
%An ideal $I$ is \emph{primary} if $fg\in I$ implies that either $f\in I$ or $g^m\in I$ for some $m>0$. A \emph{primary decomposition} of an ideal $I$ is an expression of $I$ as an intersection of primary ideals $I=\cap Q_i$. It is minimal if $P_i=Q_i$ are all distinct and Pi\not\contained\cap_{j\neq}
The \emph{saturation} of the ideal $\ideal{F}$ with respect to $f$, denoted by $\ideal{\ideal{F}:f^\infty}$, is defined to be  the ideal of polynomials $g \in\RR$ with $f^mg \in I$ for some
 $m>0$. 
This operation removes from the variety $\V(F)$ the irreducible components on which the polynomial $f$ vanishes. 
Recall that every variety can be decomposed into irreducible components (cf.\ \cite[Section~4.6]{clo} for example). The corresponding algebraic operation is the primary decomposition of the ideal defining this variety.
% is zero is the following: The primary decomposition of  consists in the components of the primary decomposition of I that do not contain any power of f.

\begin{lemma}[e.g.\ {\cite[Chapter~4]{atiyahmacdonald}}]%\despina{Q: What is a primary decomposition and should we cite \cite[Chapter 4]{atiyahmacdonald} for this lemma?}
 Let $\cap_{i=1}^m Q_i$ be a minimal primary decomposition
for the ideal $I$. %, where $Q_i$ is a $P_i$-primary ideal\textcolor{red}{, where $P_i=\sqrt{Q_i}$}.%\footnote{\textcolor{red}{Recall that a $P_i$-primary is an ideal $I$ is \emph{primary} if $fg\in I$ implies that either $f\in I$ or $g^m\in I$ for some$m>0$. If $I$ is primary and $\sqrt{I}=P$ then we say that $I$ is \emph{$P$-primary}.}} 
The saturation ideal $(I:f^\infty)$
equals $\cap_{f \notin \sqrt{Q_i}%P_i
} Q_i$.
\end{lemma}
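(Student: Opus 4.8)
The plan is to prove this by reducing the saturation of $I$ to the saturations of its primary components, computing each of those by a case split on whether $f$ lies in the corresponding associated prime, and then reassembling. Notably, minimality of the decomposition will not actually be needed for the statement; any primary decomposition works.

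First I would record that saturation by $f$ commutes with finite intersections. If $g \in (I:f^\infty)$, then $f^k g \in I = \bigcap_{i=1}^m Q_i$ for some $k$, and $f^k g \in \bigcap_i Q_i$ is equivalent to $f^k g \in Q_i$ for every $i$; conversely, if for each $i$ there is an exponent $k_i$ with $f^{k_i} g \in Q_i$, then $f^{\max_i k_i} g \in \bigcap_i Q_i = I$. Hence $(I:f^\infty) = \bigcap_{i=1}^m (Q_i:f^\infty)$. The only subtlety here is this uniform choice of exponent, which is legitimate since there are finitely many components.

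Next I would analyze $(Q:f^\infty)$ for a single $P$-primary ideal $Q$, where $P = \sqrt{Q}$. If $f \in P$, then $f^k \in Q$ for some $k$, so $f^k g \in Q$ for every $g \in \RR$, giving $(Q:f^\infty) = \RR$. If $f \notin P$, I claim $(Q:f^\infty) = Q$. The inclusion $Q \subseteq (Q:f^\infty)$ is immediate. For the reverse, suppose $f^k g \in Q$; since $P = \sqrt{Q}$ is prime and $f \notin P$, we have $f^k \notin P = \sqrt{Q}$, so the primary property of $Q$ (namely $xy \in Q$ together with $y \notin \sqrt{Q}$ forces $x \in Q$) applied with $x = g$, $y = f^k$ yields $g \in Q$. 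Thus $(Q:f^\infty) \subseteq Q$.

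Finally, combining the two steps, $(I:f^\infty) = \bigcap_{i=1}^m (Q_i:f^\infty) = \bigl(\bigcap_{f \in \sqrt{Q_i}} \RR\bigr) \cap \bigl(\bigcap_{f \notin \sqrt{Q_i}} Q_i\bigr) = \bigcap_{f \notin \sqrt{Q_i}} Q_i$, since intersecting with copies of the whole ring changes nothing; in the degenerate case where $f \in \sqrt{Q_i}$ for all $i$ the right-hand side is the empty intersection $\RR$, which agrees with $(I:f^\infty) = \RR$ because then $f \in \bigcap_i \sqrt{Q_i} = \sqrt{I}$. The main — and essentially only — obstacle is careful bookkeeping: invoking the correct characterization of primary ideals and justifying the uniform exponent in the intersection step; there is no substantive difficulty beyond that.
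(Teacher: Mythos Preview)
Your proof is correct and follows exactly the same two-step route as the paper: first reduce via $(\cap_i Q_i : f^\infty) = \cap_i (Q_i : f^\infty)$, then compute each $(Q_i:f^\infty)$ by the $f \in \sqrt{Q_i}$ versus $f \notin \sqrt{Q_i}$ dichotomy. The only difference is that the paper merely asserts these two facts, whereas you supply the (standard) verifications.
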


\begin{proof} 
It is known that $(\cap_{i=1}^m Q_i : f^\infty)=\cap_{i=1}^m(Q_i:                                      
f^\infty)$. We observe further that $(Q_i:f^\infty)=Q_i$ if $f$ does not belong to %$P_i$ 
$\sqrt{Q_i}$ and
$(Q_i:f^\infty)=(1)$ otherwise. 
\end{proof}

This allows us to set up the primitive query for $(\mH,\vi_{solve})$. %$\vi_{solve}$
 However we do not need to calculate the decomposition explicitly, but can instead carry it out using elimination ideals via Gr\"obner bases, as explained for example in \cite[Exercise 4.4.9]{clo}. 
\begin{obs}\label{SolvePrimitive}
	The primitive query for $(\mH,\vi_{solve})$ %$\vi_{solve}$
	 is simply \emph{the saturation test} explained above.
\end{obs}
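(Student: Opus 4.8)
The plan for Observation~\ref{SolvePrimitive} is a short definition chase that assembles the two ingredients already recorded above. Recall (Definition~\ref{defn:primitive}) that the primitive query for $(\mH,\vi_{solve})$ asks, given $G\subseteq\mH$ and $h\in\mH\setminus G$, to decide whether $h\in\vi_{solve}(G)$, which by Definition~\ref{defn:solveViolator} means deciding whether $\V(G)$ fails to be contained in $\V(h)$. The first step is to reformulate this containment algebraically: since $K$ is algebraically closed, the Nullstellensatz gives that $\V(G)\subseteq\V(h)$ — i.e.\ $h$ vanishes on every common zero of $G$ — if and only if $h\in\sqrt{\ideal{G}}$, so the primitive is exactly a radical ideal membership test, matching the phrasing of Theorem~\ref{thm:solve}.

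Next I would identify this radical membership test with the saturation test. Using the primary decomposition Lemma above, if $\cap_{i=1}^{m}Q_i$ is a minimal primary decomposition of $\ideal{G}$, then $\ideal{\ideal{G}:h^\infty}=\cap_{h\notin\sqrt{Q_i}}Q_i$, so $\V(\ideal{\ideal{G}:h^\infty})$ is precisely the union of those irreducible components of $\V(G)$ not contained in $\V(h)$. Hence $\V(G)\subseteq\V(h)$ if and only if $\V(\ideal{\ideal{G}:h^\infty})=\emptyset$, which by the Nullstellensatz happens if and only if $\ideal{\ideal{G}:h^\infty}=\RR$, i.e.\ $1\in\ideal{\ideal{G}:h^\infty}$. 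Equivalently (the Rabinowitsch trick), $h\notin\vi_{solve}(G)$ exactly when the saturation of $\ideal{G}$ by $h$ is the whole ring; this is the ``saturation test'' named in the statement.

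It remains to record that the test is effective: the saturation $\ideal{\ideal{G}:h^\infty}$ is computed by adjoining a fresh variable $t$, forming the ideal $\ideal{G}+\ideal{1-th}\subseteq\RR[t]$, and eliminating $t$ by a Gr\"obner basis computation in an elimination order, exactly as in \cite[Exercise 4.4.9]{clo}; whether $1$ belongs to the resulting ideal is then read off its reduced Gr\"obner basis. Thus one call to the primitive for $(\mH,\vi_{solve})$ is one saturation test, which is the assertion of the Observation. The only delicate point — and it is minor — is the bookkeeping between the (possibly non-radical) ideal $\ideal{G}$ and the variety $\V(G)$; routing the argument through the primary decomposition Lemma sidesteps it, since that Lemma already shows the saturation deletes exactly the components lying in $\V(h)$, while the Nullstellensatz supplies $\V(J)=\emptyset\iff J=\RR$ for any ideal $J$ with no radical hypothesis, and the boundary case $G=\emptyset$ is handled by the same test.
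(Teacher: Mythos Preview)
Your proposal is correct and follows essentially the same approach as the paper: the Observation is not given a formal proof there, but is justified by the discussion immediately preceding it, which reformulates $h\in\vi_{solve}(G)$ as the failure of $h$ to vanish on all of $\V(G)$, identifies the locus where $h$ does not vanish with $\V(\ideal{\ideal{G}:h^\infty})$ via the primary decomposition Lemma, and points to the elimination-ideal computation of the saturation from \cite[Exercise~4.4.9]{clo}. Your write-up simply makes each of these steps explicit, and additionally folds in the radical ideal membership reformulation that the paper records separately in Remark~\ref{rmk:solve=radical}.
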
 
 
 \begin{remark}\label{rmk:solve=radical}
 	There is an obvious reformulation of these two  ingredients that is worth stating explicitly. Namely,  since a basis $B$ for the violator space $(\mH,\vi_{solve})$ is a set of polynomials such that $\V(B)=\V(\mH)$, the strong Nullstellensatz implies that  \mbox{$\sqrt{\ideal{B}} = \sqrt{\ideal{\mH}}$.} Thus  a basis determines the ideal of the input system up to radicals, and we could have named the violator operator \mbox{$\vi_{solve}\equiv \vi_{radical}$} instead. 
Furthermore,	a polynomial $h$ vanishing on all irreducible components of the algebraic variety $\V(F)$ is equivalent to $h\in\sqrt{\ideal{F}}$, i.e., $h$ belonging to the radical of the ideal $\ideal{F}$. 
In particular, the primitive query for $\vi_{solve}$ can also be stated as the \emph{radical ideal membership test}. This test can be implemented using Gr\"obner bases, as explained for example in \cite[Proposition 4.2.8]{clo}:  $h\in\sqrt{\ideal{F}}$ if and only if $1\in\ideal{F,1-yh}\subseteq K[x_0,\dots,x_n,y]$.  Therefore,  computation of one Gr\"obner basis  of the ideal $\ideal{F,1-yh}$ suffices to carry out this test. 
\end{remark}

\medskip
Finally, we solve the problem of finding a combinatorial dimension for $\vi_{solve}$. For this, consider, as a warm up, the simpler situation where we have a
Helly-type theorem for hypersurfaces defined by \emph{homogeneous} polynomials. This was proved by Motzkin \cite{motzkin} and then later reproved by 
Deza and Frankl \cite{deza+frankl}, and it provides us with a combinatorial dimension for guaranteeing that a large-scale homogeneous system  has a solution.  Its proof relies on thinking of the polynomial ring $\RR$ as a $K$-vector space (see also the discussion before Definition~\ref{defn:LowRank}). 

\begin{lemma}[\cite{deza+frankl}, Corollary 2]\label{thm:DF-Helly} 
	Let $f_1,\dots,f_m\subset\RR$ be a system of \emph{homogeneous} polynomials, that is, $f_i\in[\RR]_{d_i}$, and  define $d=\max\{d_i\}$. 
	Suppose that every subset of $p={{n+d}\choose{d}}$ polynomials $\{f_{i_1},\dots,f_{i_p}\}\subset \{f_1,\dots,f_m\}$ has a solution. Then the entire system $\{f_1,\dots,f_m\}$ does as well. 	
	%NOTE: this says NOTHING about generating sets. the dependencies are linear (i.e. vector space bases), and not algebraic (i.e. ideal generation).  :-) 
\end{lemma}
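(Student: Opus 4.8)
The plan is to reduce the statement about common zeros of homogeneous polynomials to Helly's theorem for affine subspaces (Example~\ref{ex:affineLinear}), via the linearization supplied by the Veronese/coefficient map. First I would set $p={n+d\choose d}$, which is exactly the dimension of the $K$-vector space $[\RR]_{d}$ of homogeneous forms of degree $d$ in $n+1$ variables; when $d_i<d$ one can multiply $f_i$ by all monomials of degree $d-d_i$ without changing the zero set in $\PP^n$ (a standard saturation-type observation, harmless here because a homogeneous form of top degree $d$ vanishes at $[x]$ iff all its degree-$d$ multiples do, the only subtlety being the origin, which is excluded in projective space), so we may assume every $f_i$ has degree exactly $d$. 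Then each $f_i$ corresponds, under $\coeff_d$, to a vector $c_i=\coeff_d(f_i)\in K^{p}$, and the condition that a projective point $[x]\in\PP^n$ is a common zero of a subfamily $\{f_{i_1},\dots,f_{i_k}\}$ says precisely that the point $\nu_d([x])\in\PP^{p-1}$ lies in the intersection of the hyperplanes $H_{i_j}=\{z: \langle c_{i_j},z\rangle=0\}$.

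Next I would turn this into a genuine affine Helly statement. The common-solution hypothesis on every $p$-subset gives, for each such subset, a point on the Veronese variety lying in the corresponding $k$-fold intersection of hyperplanes through the origin of $K^{p}$. The key move is to forget that the point must lie on the Veronese variety and argue at the level of linear algebra: the forms $\{f_1,\dots,f_m\}$ span a subspace $U\subseteq[\RR]_d$ of dimension $\delta\le p$, and it suffices to show that the common zero set of $\{f_1,\dots,f_m\}$ equals the common zero set of any $\delta$ of them that form a basis of $U$ — equivalently, that any $f_i$ is a $K$-linear combination of such a basis, hence vanishes wherever the basis vanishes. Since $\delta\le p$, the hypothesis guarantees in particular that any basis subfamily together with any one more form has a common zero; but actually the cleaner route is: pick a maximal linearly independent subfamily $\{f_{i_1},\dots,f_{i_\delta}\}$; then $\{f_1,\dots,f_m\}$ and $\{f_{i_1},\dots,f_{i_\delta}\}$ have the same $K$-span, hence the same zero set, and $\delta\le p$, so the hypothesis applied to any $p$-subset containing this basis shows it has a common zero, which is then a common zero of all of $F$. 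One should double-check the degenerate case $m<p$ (then $F$ is itself a $p$-subset, trivially) and the case $\delta<p$ versus $\delta=p$.

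The main obstacle I anticipate is the reduction to uniform degree $d$: one must be careful that multiplying $f_i$ of degree $d_i<d$ by all degree-$(d-d_i)$ monomials does not spuriously change the projective variety, and in particular that it does not enlarge the span beyond dimension $p$ in a way that breaks the argument — but since all the resulting forms still lie in the $p$-dimensional space $[\RR]_d$, this is fine, and the zero set in $\PP^n$ is genuinely unchanged because a point of $\PP^n$ where $f_i$ vanishes is one where $m f_i$ vanishes for every monomial $m$, and conversely if $x_j\neq 0$ for some coordinate then $x_j^{d-d_i}f_i(x)=0$ forces $f_i(x)=0$. A secondary subtlety is purely bookkeeping: the cited Deza–Frankl corollary is stated for the existence of a solution, so I only need the span argument to produce one common zero, not to identify the whole variety; the full statement that $\V(F)=\V(F')$ for a basis $F'$ is exactly the content of the violator-space machinery (Lemma~\ref{viosolve}) and of Theorem~\ref{thm:F-Helly-vars}, and need not be reproven here. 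So the proof is essentially: uniformize the degree, linearize via $\coeff_d$, observe $\dim[\RR]_d=p$, and invoke that any $\delta\le p$ independent forms spanning the same subspace have the same zero set, combined with the hypothesis on $p$-subsets to certify that this common zero set is nonempty.
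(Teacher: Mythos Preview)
Your approach is essentially the one the paper uses. The paper does not prove Lemma~\ref{thm:DF-Helly} directly (it is cited from Deza--Frankl), but it derives it from Theorem~\ref{thm:F-Helly-vars}, whose proof is exactly your span argument: pick a maximal $K$-linearly independent subfamily, observe it has the same linear span and hence the same zero set, and apply the hypothesis to a $p$-subset containing it. Your degree-uniformization step (replacing each $f_i$ by its multiples $x_j^{d-d_i}f_i$) is the correct way to force the span to sit inside $[\RR]_d$, which is what makes the bound $p=\binom{n+d}{d}$ rather than $\binom{n+d+1}{n+1}$; the paper glosses over this point when it asserts that the span of homogeneous polynomials of degree at most $d$ has dimension at most $\binom{n+d}{d}$.

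There is one small step you pass over. After replacing $F=\{f_1,\dots,f_m\}$ by the enlarged family $F'=\{x^{\alpha}f_i : |\alpha|=d-d_i\}$, you invoke the hypothesis on a $p$-subset of $F'$, but the hypothesis was stated for $p$-subsets of $F$. You should note that any $p$ elements of $F'$ involve at most $p$ distinct indices $i$, so (assuming $m\ge p$, the only interesting case) those indices can be completed to a $p$-subset of $F$, which has a common zero by hypothesis, and that zero then kills every monomial multiple of each $f_i$ involved. With that one line added, your argument is complete and matches the paper's route via Theorem~\ref{thm:F-Helly-vars}.
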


Lemma~\ref{thm:DF-Helly} provides the combinatorial dimension that, along with the variety membership primitive from 
Observation~\ref{SolvePrimitive}, allows us to apply Clarkson's algorithms to the violator  space $(\mH,\vi_{solve})$.%$\vi_{solve}$. 
\begin{corollary}\label{ClarksonWorksForSolving}
Let $\ideal{f_1,\dots,f_m}\subset \RR$  be an ideal generated by $m$ homogeneous polynomials in $n+1$ variables of degree at most $d$;  $f_i\in[\RR]_{d_i}$ and $d=\max\{d_i\}$. Let $\delta={{n+d}\choose{d}}$. Then there is an adaptation of Clarkson's sampling algorithm that, in an expected $O\left(\delta m + \delta^{O(\delta)}\right)$ number of calls to the  primitive query~\ref{SolvePrimitive},  computes   
$\{f_{i_1},\dots,f_{i_\delta}\}$ such that $\V(f_1,\dots,f_m) = \V(f_{i_1},\dots,f_{i_\delta})$. 
\end{corollary}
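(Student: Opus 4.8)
The plan is to exhibit Corollary~\ref{ClarksonWorksForSolving} as a direct instance of Clarkson's machinery applied to the violator space $(\mH,\vi_{solve})$: once we know (i) that $(\mH,\vi_{solve})$ is a violator space, (ii) that its violation-test primitive is effectively computable, and (iii) that its combinatorial dimension is at most $\delta=\binom{n+d}{d}$, the statement is exactly Theorem~\ref{keytoolvio} translated into algebraic language. Ingredients (i) and (ii) are already in place: Lemma~\ref{viosolve} proves $(\mH,\vi_{solve})$ is a violator space, and Observation~\ref{SolvePrimitive} together with Remark~\ref{rmk:solve=radical} identifies the primitive query ``is $h\in\vi_{solve}(G)$?'' with a single saturation / radical ideal membership computation on the subsystem $G$ --- which, whenever it is invoked inside Clarkson's algorithms, has size at most $\delta$. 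So the real content is (iii).

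For (iii) I would bound the size of every basis of $(\mH,\vi_{solve})$ directly. Recall (from the discussion after Lemma~\ref{viosolve}) that a basis of $G\subseteq\mH$ is a minimal subset $B\subseteq G$ with $\V(B)=\V(G)$. The input is the linear-algebraic fact underlying Lemma~\ref{thm:DF-Helly}: for a finite set $S$ of homogeneous polynomials of degree at most $d$, the variety $\V(S)$ is already cut out by the degree-$d$ graded piece $\langle S\rangle_d=\sum_{f\in S}[\RR]_{d-\deg f}\cdot f$ of the generated ideal, a linear subspace of $[\RR]_d$. Indeed $\langle S\rangle_d$ contains $x_j^{\,d-\deg f}f$ for each $f\in S$ and each variable $x_j$, so any common zero of $\langle S\rangle_d$ other than the origin is a common zero of every $f\in S$ (the origin lying in both varieties when the $f$ are non-constant); hence $\V(S)=\V(\langle S\rangle_d)$. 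Now let $B$ be a basis. For each $f^\star\in B$, minimality forces $\V(B\setminus\{f^\star\})\supsetneq\V(B)$, hence $\langle B\setminus\{f^\star\}\rangle_d\subsetneq\langle B\rangle_d$ (otherwise the two ideals would cut out the same variety), i.e.\ $[\RR]_{d-\deg f^\star}f^\star\not\subseteq\sum_{f\in B\setminus\{f^\star\}}[\RR]_{d-\deg f}f$. Building the subspace $\sum_{f\in B}[\RR]_{d-\deg f}f$ up one summand at a time, a pigeonhole argument (if a partial sum failed to grow, the newly added summand --- and thus some $f^\star$ --- would be redundant) forces it to have dimension at least $|B|$. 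As it sits inside $[\RR]_d$, of dimension $\binom{n+d}{d}$, we conclude $|B|\le\binom{n+d}{d}=\delta$, so $\delta(\mH,\vi_{solve})\le\delta$. (This argument in fact reproves Lemma~\ref{thm:DF-Helly}; alternatively one may appeal to the more general Theorem~\ref{thm:F-Helly-vars}, proved later in this section, which identifies the combinatorial dimension with the rank of the coefficient matrix.)

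With (i)--(iii) in hand the corollary follows in one step: Theorem~\ref{keytoolvio}, applied to $(\mH,\vi_{solve})$ with $|\mH|=m$, produces a basis $B$ of $\mH$ using an expected $O\left(\delta m+\delta^{O(\delta)}\right)$ calls to the primitive of Observation~\ref{SolvePrimitive}; by (iii) $|B|\le\delta$, and by the definition of a basis $\V(B)=\V(\mH)=\V(f_1,\dots,f_m)$. Writing $B=\{f_{i_1},\dots,f_{i_\delta}\}$ --- padding, if $|B|<\delta$, with arbitrary further elements of $\mH$, which all vanish on $\V(B)=\V(\mH)$ and therefore leave the variety unchanged --- gives the output claimed.

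I expect the main obstacle to be step (iii), specifically the passage from the \emph{feasibility} Helly statement of Lemma~\ref{thm:DF-Helly} (``if every $\binom{n+d}{d}$-tuple has a common zero then all of them do'') to a bound on the \emph{combinatorial dimension}, which concerns the largest minimal subset preserving the \emph{entire} variety $\V(\mH)$, not merely its emptiness --- together with carrying out the bookkeeping uniformly for homogeneous polynomials of possibly different degrees $d_i\le d$ via the reduction to the degree-$d$ graded piece. Everything downstream of (iii) is a black-box invocation of Clarkson's algorithms through Theorem~\ref{keytoolvio}.
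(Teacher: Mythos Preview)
Your proof is correct and follows the same overall architecture as the paper: verify the violator-space axioms (Lemma~\ref{viosolve}), identify the primitive (Observation~\ref{SolvePrimitive}), bound the combinatorial dimension by $\binom{n+d}{d}$, and invoke Theorem~\ref{keytoolvio}. The paper treats this corollary as immediate, saying only that ``Lemma~\ref{thm:DF-Helly} provides the combinatorial dimension'' before stating the result.

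Where you diverge---and, I think, improve on the paper---is precisely the point you flag as the main obstacle. Lemma~\ref{thm:DF-Helly} as stated is a \emph{feasibility} Helly theorem (infeasible system $\Rightarrow$ some $\binom{n+d}{d}$-subset already infeasible), whereas bounding $\delta(\mH,\vi_{solve})$ requires the stronger $D$-Helly property of Definition~\ref{defn:helly}: every subset has a $\delta$-subset cutting out the \emph{same} variety, not merely an empty one. The paper's later Theorem~\ref{thm:F-Helly-vars} does give the right kind of statement, but its bound is the coefficient-matrix rank, which for homogeneous polynomials of \emph{mixed} degrees $d_i\le d$ need not be $\le\binom{n+d}{d}$. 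Your argument---passing to the degree-$d$ graded piece $\langle S\rangle_d\subseteq[\RR]_d$, showing $\V(S)=\V(\langle S\rangle_d)$ via the multipliers $x_j^{\,d-\deg f}$, and then using minimality of a basis $B$ to force a strict chain of subspaces of $[\RR]_d$---handles the mixed-degree case cleanly and lands exactly on $\binom{n+d}{d}$. This is in effect the linear-algebraic content behind Deza--Frankl, upgraded from feasibility to the full Helly property; the paper relies on the reader supplying that upgrade, and you have written it out.

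One small cosmetic point: in the equality $\V(S)=\V(\langle S\rangle_d)$ you should note (as you partly do) that the origin is harmless because all $f_i$ are homogeneous of positive degree; the degenerate case of a nonzero constant among the $f_i$ is trivial since then $\delta=1$ already suffices.
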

In particular, this algorithm is linear in the number of input equations $m$, and a randomized polynomial time algorithm when the number of variables $n+1$ and the largest degree $d$ are fixed.  
Furthermore, we can extend it to actually \emph{solve} a large system: once a basis $B=\{f_{i_1},\dots,f_{i_\delta}\}$ for the space $(\{f_1,\dots,f_m\},\vi_{solve})$ is found,  then we can use any computer algebra software (e.g.\ \cite{bertini,M2,phcPACK}) to solve $f_{i_1}=\dots=f_{i_\delta}=0$.   
 %	an algorithm to solve find solution \sonja{note to self: we don't really solve the system, we `solve the violator space'. want to solve the system? need one more step: solve the system provided by a basis of the violator space}  for the system $f_1=0,\dots,f_m=0$. 
%
% There used to be some commented out notes here (proof of old Deza-Frankl result, for example).   
%  I moved them to right after end{document} -- look for the date stamp 3/3/2015.  :) 
%
	
\smallskip
Note that Lemma \ref{thm:DF-Helly} can be thought of as a statement about the complexity of Hilbert's Nullstellensatz.
 If  $\ideal{f_1,\dots,f_m}=\RR$ (i.e.,  $\V(f_1,\dots,f_m)=\emptyset$), then there exists a subset of size $\delta={{n+d}\choose{d}}$ polynomials $\{f_{i_1},\dots,f_{i_{\delta}}\}$ such that $\V(f_{i_1},\dots,f_{i_\delta})=\emptyset$ as well.   In particular, there is a  Nullstellensatz certificate with that many elements. 
The dimension $n+d\choose d$ is, in fact, only an upper bound, attainable only by dense systems.   However, in practice, many systems are very large but sparse, and possibly non-homogeneous.  Let us highlight again that the notion of `sparsity' we consider is captured by a low-rank property of the system of polynomial equations, made explicit below
in terms of the coefficient matrix. This is crucially different from the usual considerations of monomial supports (Newton polytopes) of the system; instead, we look at the coefficients of the input polynomials - that is, we linearize the problem and consider the related vector spaces, as illustrated in the following example.  

\begin{example}\label{example:coloring}
Consider the following system 
% we could pick a large number if we want. 
consisting of two types of polynomials: polynomials of the form $x_i^2-1$ for $i=1,\dots,n$, and polynomials of the form $x_i+x_j$ for the pairs $\{i,j: i\not\equiv j \mod 2\}$ along with the additional pair $i=1,j=3$.  This system has $m=n^2+n+1$ equations, and the interesting situation is when the number of variables is a large even number, that is, $n=2k$ for any large integer $k$. 
%We were motivated to consider this specific polynomial system by algebraic methods in graph theory (see \cite{JesDesColoring} and the references therein for an overview).  
This system of polynomials generates the $2$-coloring ideal of a particular $n$-vertex non-chordal graph. (See \cite{JesDesColoring} and references therein for our motivation to consider this particular system.) 

Consider a concrete graph. Take the $n$-cycle with all possible even chords, and one extra edge $\{1,3\}$. Thus the pairs $\{i,j\}$ are indexed by edges of the  graph $G$ on $n$ nodes where all odd-numbered vertices are connected to all even-numbered vertices, and with one additional edge $\{1,3\}$. 
 % This is actually just a $2k$-cycle with all possible EVEN chords. But then it can be thought of as a complete bipartite graph - the partition is odd vs. even. And now just add one more edge: 1-3, to not make it 2-colorable. 
 % So the number of edges of the graph is n*n + 1,  and the  number of vertices is n. That's how I got the number of polynomials m. 

We wish to decide if the system has a solution, but since there are $n^2+n+1$ many polynomials, 
we would like to try to avoid computing a Gr\"obner basis of this ideal. Instead, we search for a subsystem of some specific size that determines the same variety.  
It turns out that the system actually has no solution. 
%%(I computed the Gr\"obner basis in Macaulay2 with $k=10$, code is in dropbox). 
Indeed, a certificate for infeasibility is a random subsystem consisting of the first $n$ quadratic equations,  $n-1$ of the edge equations $x_i+x_j$ with $\{i,j: i\not\equiv j \mod 2\}$, and the additional equation $x_1+x_3$.  
% %(I computed it again in Macaulay2 - no solution - I computed many instances in fact, to verify.) 
For example,  the first $n-1$ edge polynomials  will do to construct a $2n$-sized certificate of this form. %%, which has size $2n$. 
Why is the number $n+(n-1)+1 = 2n$ so special? 

To answer this question, let us linearize the problem: to each of the polynomials $f$ associate a coefficient (column) vector $\coeff(f)\in\C^{2n+1}$ whose coordinates are indexed by the monomials appearing in the system $x_1^2,\dots,x_n^2,1,x_1,\dots,x_n$.  Putting all these column vectors in one matrix produces the coefficient matrix of the system of the form
\[
	\left[\begin{matrix} 
	I_n&0\\
	{\bf -1} &0 \\
	 0&E
	\end{matrix}\right], 
\]
where $I_n$ is the $n\times n$ identity, ${\bf -1}$ is the row vector with all entries $-1$, and $E$ is the vertex-edge incidence matrix of the graph $G$. 
%%Each column corresponds to one of the polynomials, and rows are ordered as above $x_1^2,\dots,x_n^2,1,x_1,\dots,x_n$.  
Since it is known that the rank of an edge-incidence matrix of an $n$-vertex connected graph is $n-1$, the rank of this matrix is $\delta= n+(n-1) +1 = 2n$. 
%(Fact: Rank of edge-incidence matrix of a graph on $n$ vertices  is $n-1$  when the graph is connected.)   % see, e.g., http://compalg.inf.elte.hu/~tony/Oktatas/TDK/FINAL/Chap%2010.PDF

Remarkably, the magic size of the infeasibility certificate equals the rank of this coefficient matrix. 
\end{example}
%%%%%%%%%%%%%%%%%%%%%%%%%%%%%
%%%%%%%%%%%%%%%%%%%%%%%%%%%%%%%%%%%%%%%%%

This motivating example suggests that the desired Helly-type number of this problem is 
 captured by a natural low-rank property of the  system. To define it precisely,  let us revisit the extension of the Veronese embedding to non-homogeneous polynomials explained in the Introduction. 
 Here we adopt the notation from \cite[Section 2]{Batselier2013} and consider polynomials in $\RR$ of degree up to $d$  as a $K$-vector space denoted by $\mathcal C_d^{n+1}$.   
The vector space $\mathcal C_d^{n+1}$ has dimension ${d+n+1\choose n+1}$, which, of course, equals the number of monomials in $n+1$ variables of (total) degree from $0$ to $d$. 
In this way, any polynomial $f\in\RR$ is represented by a (column) vector, $\coeff(f)\in\mathcal C_d^{n+1}$, whose entries are the \emph{coefficients} of $f$. Thus, any system $S\subset\RR$ defines a matrix with $|S|$ columns, each of which is an element of $\mathcal C_d^{n+1}$.  
\begin{defn}\label{defn:LowRank} \label{defn:sparsity} 
	A system $S\subset \RR$ is said to have \emph{rank  $D$} if $\dim_K\vecsp{S}=D$, where $\vecsp{S}$ is the vector subspace of $\mathcal C_d^{n+1}$ generated by the coefficients of the polynomials in $S$. 
\end{defn}

%%  LET US KEEP HERE FOR OUR REFERENCE FOR NOW, UNTIL WE DECIDE WHAT TO DO ABOUT THE NAME 'RANK' : 
%
%     definition of a rank of a polynomial system: 
% 
%	Our notion of rank is / is not \emph{[check?]} the same as the `rank of a polynomial system' defined by Sommese and Wampler in \cite[section13.4]{SommeseWamplerBook}. 
%   [`The Numerical Solution of Systems of Polynomials Arising in Engineering', Andrew  Sommese and  Charles  Wampler ] 
%   google book link to look at definition:  
%   https://books.google.com/books?id=S6fIrWaFN0sC&lpg=PA239&ots=iNPoDPd7jX&dq=rank%20of%20a%20polynomial%20system&pg=PA239#v=onepage&q=rank%20of%20a%20polynomial%20system&f=false 
%

We need to also make the notion of Helly-type theorems more precise in the setting of varieties. 
\begin{defn}[Adapted from Definition 1.1.\ in \cite{Frankl}] \label{defn:helly}
	A set $S\subset\RR$  is said to have \emph{the $D$-Helly property} if for every nonempty subset $S_0\subset S$, one can find $p_1,\dots,p_D\in S_0$ with $\V(S_0)=\V(p_1,\dots,p_D)$.  
\end{defn}
The following result, which implies Theorem~\ref{thm:solve}, is an extension of \cite{Frankl} to non-homogeneous systems. It also implies (the contrapositive of)  Lemma \ref{thm:DF-Helly} when restricted to  homogeneous systems, in the case when the system has no solution. 
The proof follows that of \cite{Frankl}, although we remove the homogeneity assumption. We include it here for completeness. 
    % It works! We just have a different vector space thank Frankl !!  
\begin{thm}\label{thm:F-Helly-vars}
	Any polynomial  system $S\subset \RR$ of  rank $D$ has the $D$-Helly property.  
	
	In other words, for all subsets $\P\subset S$, there exist $p_1,\dots,p_{D}\in\P$ such that $\V(\P)=\V(p_1,\dots,p_{D})$. 
\end{thm}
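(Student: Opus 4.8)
The plan is to prove that a system $S$ of rank $D$ has the $D$-Helly property by a dimension-reduction argument on the vector space $\vecsp{S}\subseteq\mathcal C_d^{n+1}$. Fix a nonempty subset $\P\subseteq S$. The key observation is that passing from a set of polynomials to the ideal they generate only enlarges the linear span in a controlled way: although $\ideal{\P}$ is infinite-dimensional, the variety $\V(\P)$ depends only on $\P$, and more to the point, if $q$ is a $K$-linear combination of elements of $\P$, then $\V(\P)\subseteq\V(q)$. So I would first replace $\P$ by a maximal linearly independent subset: choose $p_1,\dots,p_r\in\P$ whose coefficient vectors $\coeff(p_1),\dots,\coeff(p_r)$ form a basis of $\vecsp{\P}$. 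Since $\vecsp{\P}\subseteq\vecsp{S}$ we have $r\le D$. Every other $p\in\P$ then satisfies $\coeff(p)=\sum_j c_j\coeff(p_j)$, hence $p=\sum_j c_j p_j$ as polynomials, hence $\V(p_1,\dots,p_r)\subseteq\V(p)$. Therefore $\V(p_1,\dots,p_r)\subseteq\bigcap_{p\in\P}\V(p)=\V(\P)$, and the reverse inclusion $\V(\P)\subseteq\V(p_1,\dots,p_r)$ is automatic since $\{p_1,\dots,p_r\}\subseteq\P$. This already gives $\V(\P)=\V(p_1,\dots,p_r)$ with $r\le D$.

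The only gap is cosmetic: Definition~\ref{defn:helly} asks for exactly $D$ polynomials $p_1,\dots,p_D\in\P$, whereas the argument produces $r\le D$ of them. I would close this by padding: if $r<D$, simply repeat one of the chosen polynomials (or pick any further elements of $\P$, using that $\P$ is nonempty), since adding polynomials already in the list to a generating set does not change the variety. Thus we obtain $p_1,\dots,p_D\in\P$ with $\V(\P)=\V(p_1,\dots,p_D)$, which is exactly the $D$-Helly property. Because $\P\subseteq S$ was arbitrary, $S$ has the $D$-Helly property.

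I do not expect a serious obstacle here; the proof is essentially the linear-algebra core of the Deza--Frankl / Frankl argument, stripped of the homogeneity bookkeeping. The one point that deserves care — and where removing the homogeneity hypothesis actually matters — is the claim "$\coeff(p)$ in the span of the $\coeff(p_j)$ implies $p=\sum c_j p_j$": this requires that the map $\coeff$ from $\mathcal C_d^{n+1}$ (degree-$\le d$ polynomials) to the coordinate vector space is a $K$-linear isomorphism onto its image, which it is by construction of the extended Veronese/coefficient map described in the Introduction, provided $d$ is taken large enough to bound the degrees of all polynomials in $S$. So I would open the proof by fixing such a $d$ and recalling that $\coeff$ is linear and injective on $\mathcal C_d^{n+1}$; everything else is the span argument above followed by the padding remark.
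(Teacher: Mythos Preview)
Your proposal is correct and follows essentially the same route as the paper's own proof: pick a vector-space basis $p_1,\dots,p_r$ of $\vecsp{\P}$, observe that every $p\in\P$ is a $K$-linear combination of the $p_i$ and hence vanishes wherever they do, and conclude $\V(\P)=\V(p_1,\dots,p_r)$. The paper phrases the last step by contradiction and reduces to the case $\P=S$ rather than padding from $r$ up to $D$, but the mathematical content is identical.
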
 
\begin{proof} 
	Let $\P\subset S$ be an arbitrary subset of polynomials, and denote by $\vecsp{\P}\subset\mathcal C_d^{n+1}$ the vector subspace it generates. 
Let $d_0=\dim_K \vecsp{\P}$. 	We need to find polynomials $p_1,\dots,p_{D}$ such that $\V(p_1,\dots,p_D)=\V(\P)$. Note that $d_0\leq D$, of course, so it is sufficient % \sonja{???} 
to consider the case $\P=S$. 
	
	Choose a vector space basis $\vecsp{p_1,\dots,p_{D}}=\vecsp{\P}=\vecsp{S}$. 
	It suffices to show $\V(p_1,\dots,p_{D})\subseteq \V(S)$; indeed,  the inclusions $\V(S)\subseteq \V(\P)\subseteq\V(p_1,\dots,p_{D})$ already hold. 
	
	Suppose, on the contrary, there exists 	$x=(x_0,\dots,x_n)\in\C^{n+1}$ and $p\in S$ such that $p(x)\neq 0$ but $p_i(x)=0$ for all $i=1,\dots,D$. Since $p_i$'s generate $S$ as a vector space,  there exist constants $\gamma_i\in K$ with $p=\sum_i\gamma_i p_i$, implying that $p(x)=0$, a contradiction. 
\end{proof}
The proof above is constructive: to find a subset $p_1,\dots,p_D\in S_0$, one only needs to compute a vector space basis for $\vecsp{S_0}$. Thus, linear algebra (i.e., Gaussian elimination) can construct this subset in time $O(|S_0|^3)$. 
The sampling algorithm based on violator spaces is more efficient. 
%Note that {Frankl89} also shows that this is optimal, in the sense that if $k$ is infinite, then $S$ does not have the $(D-2)$-Helly property. {Do we also want to include that proof?That this is best possible result?} \emph{[[I have not written this part of the proof down as I had a notational issue with his proof: see matrix $M$ on page 244 of Frankl'89. what are the coordinates??]]}
% And in the example from the introduction, I am not yet convinced that this is optimal OK ? 

\begin{proof}[Proof of Theorem~\ref{thm:solve}] From Lemma~\ref{viosolve}, we know that $(\{f_1,\dots,f_m\},\vi_{solve})$ is a violator space. Theorem \ref{thm:F-Helly-vars} shows that it has a combinatorial dimension, and Observation~\ref{SolvePrimitive} shows that there exists a way to answer the primitive test.
Having these ingredients, Theorem~\ref{keytoolvio} holds and it is possible for us to apply Clarkson's Algorithm again. 
\end{proof}

Remark~\ref{rmk:solve=radical} provides the following interpretation of Theorem~\ref{thm:F-Helly-vars}: 
\begin{corollary}\label{cor:radicalsViolator} 
	Let $I=\ideal{f_1,\dots,f_m}\subset\RR$ and let $D=\dim_K\vecsp{f_1,\dots,f_m}$.  Then, for all subsets $\P$ of the generators $f_1,\dots,f_m$, there exist $p_1,\dots,p_{D}\in\P$ such that  $\sqrt{\P}=\sqrt{\ideal{p_1,\dots,p_D}}$.
\end{corollary}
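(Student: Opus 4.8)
The plan is to deduce the statement directly from Theorem~\ref{thm:F-Helly-vars} together with the strong form of Hilbert's Nullstellensatz, exactly as anticipated in Remark~\ref{rmk:solve=radical}. First I would fix an arbitrary subset $\P\subseteq\{f_1,\dots,f_m\}$ and apply Theorem~\ref{thm:F-Helly-vars} to the system $S=\{f_1,\dots,f_m\}$, which has rank $D$ by hypothesis. Since the rank of $\P$ is at most $D$, the theorem yields polynomials $p_1,\dots,p_D\in\P$ (e.g.\ a vector-space basis of $\vecsp{\P}$, padded arbitrarily with further elements of $\P$ if $\dim_K\vecsp{\P}<D$) such that $\V(\P)=\V(p_1,\dots,p_D)$.

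Next I would pass from varieties to radical ideals. Because $K$ is algebraically closed (a standing assumption of the paper), the strong Nullstellensatz gives $\I(\V(J))=\sqrt{J}$ for every ideal $J\subseteq\RR$. Applying the operator $\I(\cdot)$ to both sides of the equality $\V(\P)=\V(p_1,\dots,p_D)$ then yields $\sqrt{\ideal{\P}}=\I(\V(\ideal{\P}))=\I(\V(\ideal{p_1,\dots,p_D}))=\sqrt{\ideal{p_1,\dots,p_D}}$, which is precisely the claimed identity. (As in the statement, $\sqrt{\P}$ abbreviates $\sqrt{\ideal{\P}}$.)

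The only points needing care are bookkeeping ones rather than genuine obstacles. One must make sure the ambient degree bound $d$ defining $\mathcal C_d^{n+1}$ is chosen large enough that every $f_i$ lies in it, so that ``rank $D$'' is well defined and insensitive to this choice; and one must note that a vector-space basis of $\vecsp{\P}$ can always be realized by elements of $\P$ itself, which is immediate since $\vecsp{\P}$ is by definition spanned by the finitely many coefficient vectors $\coeff(f)$ for $f\in\P$. No new estimate or construction beyond Theorem~\ref{thm:F-Helly-vars} is required: the corollary is purely a translation of that theorem into the language of radical ideals via the Nullstellensatz, and I expect the write-up to be just a few lines.
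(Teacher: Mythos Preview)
Your proposal is correct and follows exactly the approach the paper intends: the corollary is stated immediately after the line ``Remark~\ref{rmk:solve=radical} provides the following interpretation of Theorem~\ref{thm:F-Helly-vars},'' with no separate proof, so the paper's argument is precisely the one you give---apply Theorem~\ref{thm:F-Helly-vars} to obtain $\V(\P)=\V(p_1,\dots,p_D)$ and then invoke the strong Nullstellensatz (valid since $K$ is assumed algebraically closed) to pass to radicals. Your bookkeeping remarks about padding when $\dim_K\vecsp{\P}<D$ and about the ambient degree bound are accurate and match the spirit of the proof of Theorem~\ref{thm:F-Helly-vars}.
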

%GONE: 
%%\begin{proof}[Proof of Theorem~\ref{thm:radicals}] By Remark~\ref{rmk:solve=radical}, we may use  $(\{f_1,\dots,f_m\},\vi_{solve})$ as the  violator space for the radical ideal generator problem. The remainder of the construction  is the same as  that of Theorem~\ref{thm:solve}.  \end{proof}

%\despina{Still deciding whether we want to say to include anything about graph-coloring ideals here. ideas currently commented out.}
%Intuition: For our graph coloring systems: \\  %% THIS should go into an applications section at the end of the paper? 
%- for graphs whose non-k-colorobility is a global property Delta will be large, and our theorem won't help, but for those whose k-colorability is much more local (i.e. a k+1 clique in a huge graph) then Delta will be small (i.e. k+1).\\
%- for graphs which are k-colorable: a similar thing should hold: i.e. if the graph is "easily" k-colorable with many choices of colors Delta will be small and for those which are barely k-colorable Delta should be large.

\section{A violator space for finding generating sets of small cardinality} \label{sec:mingen}
% In this section, we DO go back to homogeneous/projective, b/c we need betti number definition. 

In this section, we apply the violator space approach to obtain a version of Clarkson's algorithm for calculating  small generating sets of general homogeneous ideals as defined on page 3.  As in Section~\ref{sec:solve}, this task rests upon three ingredients: the appropriate violator  operator, understanding the combinatorial dimension for this problem, and a suitable primitive query which we will use as a black box. As before, fixing the definition of the violator operator induces the meaning of the word `basis', as well as the construction of the black-box primitive. 

To determine the natural violator space for the ideal generation problem, let $I\subset \RR$ be a \emph{homogeneous} ideal, $\mH$ some initial generating set of $I$, and define the operator $\vi_{SmallGen}$ as follows. 

\begin{defn}\label{defn:idealViolator}[Violator Space for Homogeneous Ideal Generators]
	Let	 $S\subset \mH$ be finite subsets of $\RR$. We define the  operator  $\vi_{SmallGen}:2^\mH\to2^\mH$   to record the set of polynomials in $H$ that are not in the ideal generated by the polynomials in $S$. Formally, 
	\[ \vi_{SmallGen}(S)=\{ f \in\mH : \ideal{S,f}\supsetneq \ideal{S} \}.
	\]
\end{defn}
Equivalently, the operator can be viewed as $\vi_{SmallGen}(S)=\{ f \in\mH : f\not\in \ideal{S} \}$. 

\begin{lm} \label{viosmallgen}
	The pair $(\mH,\vi_{SmallGen})$ is a violator space. 
\end{lm}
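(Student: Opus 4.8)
The plan is to verify the two violator space axioms for $(\mH,\vi_{SmallGen})$ directly, mirroring the structure of the proof of Lemma~\ref{viosolve} but working with the ideal $\ideal{S}$ in place of the variety $\V(S)$. In fact the argument is cleaner here, because the assignment $S\mapsto\ideal{S}$ is itself monotone and the relevant containments are all between ideals rather than between varieties, so there is no Nullstellensatz subtlety to worry about.

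First I would check \emph{consistency}. For any $S\subseteq\mH$ and any $f\in S$ we have $f\in\ideal{S}$, so $\ideal{S,f}=\ideal{S}$, hence $f\notin\vi_{SmallGen}(S)$; this gives $S\cap\vi_{SmallGen}(S)=\emptyset$. Next, for \emph{locality}, suppose $F\subsetneq G\subseteq\mH$ with $G\cap\vi_{SmallGen}(F)=\emptyset$. The first step is the auxiliary claim $\ideal{F}=\ideal{G}$. The inclusion $\ideal{F}\subseteq\ideal{G}$ is immediate since $F\subseteq G$. For the reverse: take any $g\in G$; since $g\notin\vi_{SmallGen}(F)$ by hypothesis, the equivalent formulation of the operator gives $g\in\ideal{F}$. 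As this holds for every generator $g$ of $G$, we get $\ideal{G}\subseteq\ideal{F}$, so $\ideal{F}=\ideal{G}$. Finally I would show $\ideal{F}=\ideal{G}$ implies $\vi_{SmallGen}(F)=\vi_{SmallGen}(G)$: for $h\in\mH$, the condition $h\in\vi_{SmallGen}(F)$ is $h\notin\ideal{F}$, which by the equality of ideals is the same as $h\notin\ideal{G}$, i.e.\ $h\in\vi_{SmallGen}(G)$. Hence the two violator sets coincide and locality holds.

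There is essentially no main obstacle here: once one uses the equivalent description $\vi_{SmallGen}(S)=\{f\in\mH:f\notin\ideal{S}\}$ noted right after the definition, everything reduces to the trivial observation that ideal membership is monotone in the generating set. The only point deserving a word of care is the direction $G\cap\vi_{SmallGen}(F)=\emptyset \Rightarrow \ideal{G}\subseteq\ideal{F}$, where one must remember that $\ideal{G}$ is generated by the finitely many $g\in G$, so it suffices to put each such $g$ into $\ideal{F}$; I would state this explicitly rather than leaving it implicit. After that the proof is complete, and as in Section~\ref{sec:solve} one records that a basis $B\subseteq G$ of this violator space is exactly a minimal subset of $G$ with $\ideal{B}=\ideal{G}$, so that a basis of $\mH$ is a minimal (with respect to this sampling procedure) generating set of $I=\ideal{\mH}$, which is what Theorem~\ref{thm:MinGenAlgo} will exploit.
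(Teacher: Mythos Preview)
Your proof is correct and follows essentially the same approach as the paper's: both verify consistency directly from the definition, then for locality first establish $\ideal{F}=\ideal{G}$ from $G\cap\vi_{SmallGen}(F)=\emptyset$ and deduce $\vi_{SmallGen}(F)=\vi_{SmallGen}(G)$. The only cosmetic difference is that in the last step you use the equivalent description $h\notin\ideal{F}\Leftrightarrow h\notin\ideal{G}$, whereas the paper phrases it via $\ideal{F,h}=\ideal{G,h}\supsetneq\ideal{F}=\ideal{G}$; the content is identical.
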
 
\begin{proof}
Note that $\vi_{SmallGen}(S)\cap S=\emptyset$ by definition of $\vi_{SmallGen}(S)$, and thus the  operator satisfies the  consistency axiom. 

To  show locality, suppose that $F\subsetneq G\subset \mH$ and $G\cap \vi_{SmallGen}(F)=\emptyset$.  Since $F\subsetneq G$,  $\ideal{F}\subseteq \ideal{G}$. On the other hand $G\cap \vi_{SmallGen}(F)=\emptyset$ implies that $G\subseteq\ideal{F}$ which in turn implies that $\ideal{G}\subseteq\ideal{F}.$ Then the ideals are equal. Then, because $\ideal{G}=\ideal{F}$ we
can prove that  $\vi_{SmallGen}(F)=\vi_{SmallGen}(G)$. Note first that $\ideal{G}=\ideal{F}$, %happens 
holds if and only if $\ideal{G,h} = \ideal{F,h}$ for all polynomials $h\in H$. 
%Again, as it is always the case formally, we need to prove two containments to be sure  $\vi_{SmallGen}(F)=\vi_{SmallGen}(G)$, but $h \in \vi_{SmallGen}(F)$ happens if and only if $\ideal{G,h}=\ideal{F,h}\supsetneq \ideal{F}=\ideal{G}$. But this chain of equations and inequality happens if and only if $h \in \vi_{SmallGen}(G)$. 

Finally, to show  $\vi_{SmallGen}(F)=\vi_{SmallGen}(G)$, we note that $h \in \vi_{SmallGen}(F)$ if and only if $\ideal{G,h}=\ideal{F,h}\supsetneq \ideal{F}=\ideal{G}$, and this chain of equations and containment holds if and only if $h \in \vi_{SmallGen}(G)$. 
Therefore,  locality holds %too 
as well and %the operator 
$\vi_{SmallGen}$ is a violator space operator.
  \end{proof}

It is clear from the definition that $(\mH, \vi_{SmallGen})$ %$\vi_{SmallGen}$ 
is a violator space for which the basis of $G \subset H$ is a minimal generating set of the ideal $\ideal{G}$.  
% Interestingly, $\vi_{SmallGen}$ and $\vi_{solve}\equiv \vi_{radical}$ are essentially the same operators - up to radicals. 

\medskip
The next ingredient in this problem is the combinatorial dimension: the size of the largest minimal generating set. %\textcolor{red}{edited text.}
This natural combinatorial dimension already exists in commutative algebra, namely, it equals a certain Betti number. 
(Recall that
  \emph{Betti numbers} are the ranks $\beta_{i,j}$ of modules in the minimal (graded) free resolution of the ring $\RR/I$; see, for example, \cite[Section 1B (pages 5-9)]{eisenbudbook}.) Specifically, the number $\beta_{0,j}$ is defined as the number of elements of degree $j$ required 
among any set of minimal generators of $I$.  The \emph{(0-th) total Betti number} of $\RR/I$, which we will denote by $\beta(\RR/I)$,   simply equals $\sum_j \beta_{0,j}$, and is then the total number of minimal generators of the ideal $I$.  It is well known that while $I$ has many generating sets, every minimal generating set has the same cardinality, namely $\beta(\RR/I)$. In conclusion, it is known that

\begin{obs} 
	The combinatorial dimension for $(H,\vi_{SmallGen})$ is the (0-th) total Betti number of the ideal $I=\ideal{H}$; in symbols,  $\beta(\RR/I)=\delta(\mH,\vi_{SmallGen})$. 
\end{obs}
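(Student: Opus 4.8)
The plan is to show that the combinatorial dimension of the violator space $(\mH,\vi_{SmallGen})$ coincides with $\beta(\RR/I)$ by matching the two combinatorial notions of ``basis'' -- the one from Definition of a basis of a violator space, and the classical one of a minimal generating set of the ideal. The key fact we may already use, established in the discussion just before the Observation, is that a subset $B\subseteq H$ is a basis of $H$ (in the violator-space sense) if and only if $B$ is a minimal generating set of $I=\ideal{H}$: this follows because $\vi_{SmallGen}(B)=\vi_{SmallGen}(H)$ translates, via the locality argument in Lemma~\ref{viosmallgen}, into $\ideal{B}=\ideal{H}=I$, while minimality of $B$ as a subset with this property is exactly minimality of $B$ as a generating set (no proper subset generates $I$). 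So the combinatorial dimension $\delta(\mH,\vi_{SmallGen})$, being the size of a largest basis, equals the size of a largest minimal generating set of $I$ contained in $H$.

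The remaining step is to invoke the standard commutative-algebra fact -- recalled in the paragraph preceding the Observation -- that every minimal generating set of a homogeneous ideal $I$ has the same cardinality, namely $\beta(\RR/I)=\sum_j\beta_{0,j}$. This is a consequence of the graded Nakayama lemma: minimal generators of $I$ correspond to a $K$-basis of $I/\mathfrak{m}I$ (where $\mathfrak m=(x_0,\dots,x_n)$), whose dimension is the well-defined invariant $\beta_{0}(\RR/I)$. Since all minimal generating sets have the same size, in particular the largest minimal generating set of $I$ that lies inside $H$ has size $\beta(\RR/I)$ -- provided such a minimal generating set inside $H$ exists, which it does because $H$ itself generates $I$ and any generating set contains a minimal one (Zorn/finiteness in the graded-finite setting). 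Chaining the two equalities gives $\delta(\mH,\vi_{SmallGen})=\beta(\RR/I)$.

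Concretely I would carry this out in three short steps: (i) recall from Lemma~\ref{viosmallgen} and the definition of basis that $B\subseteq H$ is a basis of $(\mH,\vi_{SmallGen})$ exactly when $B$ is a minimal (with respect to inclusion) generating set of $I$; (ii) observe $H$ contains at least one minimal generating set, so bases exist and $\delta$ is the maximum of their sizes; (iii) cite the graded Nakayama consequence that all minimal generating sets of the homogeneous ideal $I$ have cardinality $\beta(\RR/I)$, hence the maximum in (ii) is simply $\beta(\RR/I)$.

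I do not expect any genuine obstacle here: the content is entirely bookkeeping, with the one nontrivial input (uniform cardinality of minimal generating sets) being a classical theorem that the paper explicitly quotes rather than proves. The only point requiring a word of care is the homogeneity hypothesis, which is exactly what makes the graded Nakayama lemma applicable and hence guarantees the invariance of the generator count; without it the ``combinatorial dimension'' need not be a single number, so it should be stated that this is where homogeneity of $I$ is used.
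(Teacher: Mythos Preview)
Your argument matches the paper's informal justification (the paragraph preceding the Observation): identify violator-space bases with minimal generating sets, then quote graded Nakayama for the uniform cardinality $\beta(\RR/I)$. Steps (ii)--(iii) are correct and are exactly what the paper invokes.

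The gap is in the passage from ``basis of $H$'' to ``combinatorial dimension.'' You correctly note that a basis \emph{of $H$} (a minimal $B\subseteq H$ with $\vi_{SmallGen}(B)=\vi_{SmallGen}(H)$) is a minimal generating set of $I$. But the combinatorial dimension, per Definition~19 as quoted, is the size of the largest \emph{basis}, and by the first clause of Definition~7 a basis is any $B\subseteq H$ with $B\cap\vi_{SmallGen}(F)\neq\emptyset$ for every proper $F\subset B$; unwinding, this says precisely that $B$ is a minimal generating set of $\ideal{B}$, which need not equal $I$. These two notions can have different sizes: with $H=\{x,\,y^2,\,xy,\,x^2\}\subset K[x,y]$ one has $I=\ideal{x,y^2}$ and $\beta(\RR/I)=2$, yet $B=\{x^2,xy,y^2\}\subset H$ minimally generates $\mathfrak m^2$ and is therefore a basis of size $3>\beta(\RR/I)$. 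So your argument---and the paper's paragraph---only shows that every basis \emph{of $H$} has size $\beta(\RR/I)$; it does not bound the size of an arbitrary basis and hence does not establish $\delta(\mH,\vi_{SmallGen})=\beta(\RR/I)$ as literally stated. In your step~(i) the phrase ``$B$ is a basis of $(\mH,\vi_{SmallGen})$ exactly when $B$ is a minimal generating set of $I$'' is where this conflation occurs. What is actually used downstream (for Clarkson's algorithm applied to $H$) is the weaker fact about bases of $H$, which you do prove.
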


Although it may be difficult to exactly compute $\beta(\RR/I)$ in general, a natural upper bound for $\beta(\RR/I)$ is the Betti number for any of its initial ideals (the standard inequality holds by upper-semicontinuity; see e.g. \cite[Theorem 8.29]{millersturmfels}).  In particular,  if $\mH$ is known to contain a Gr\"obner basis with respect to some monomial order, 
then the combinatorial dimension can be estimated by computing the minimal generators of an initial ideal of $\ideal{\mH}$, which is a monomial ideal problem and therefore easy.  In general, however, we only need $\beta(\RR/I)<|\mH|$ for the proposed algorithms to  be  efficient. 

\medskip
The last necessary ingredient is the primitive query for $\vi_{SmallGen}$. 
\begin{obs}
	The primitive query for $\vi_{SmallGen}$, deciding  if $h\in \vi_{SmallGen}(G)$ given $h\in\mH$ and $G\subset H$,   is an ideal membership test. 
\end{obs}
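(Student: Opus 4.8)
The plan is to verify that the primitive query for $\vi_{SmallGen}$, which must decide whether $h \in \vi_{SmallGen}(G)$ for a given $h \in \mH$ and $G \subseteq \mH$, is exactly the classical ideal membership problem, and then to recall that this problem admits a standard algorithmic solution via Gr\"obner bases. First I would unwind the definition: by Definition~\ref{defn:idealViolator}, and its equivalent reformulation $\vi_{SmallGen}(S)=\{f\in\mH : f\notin\ideal{S}\}$, the statement $h\in\vi_{SmallGen}(G)$ holds if and only if $h\notin\ideal{G}$. Hence deciding the primitive query is literally the negation of an ideal membership test for the polynomial $h$ against the ideal generated by the finite set $G$.

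Next I would recall the standard algorithmic resolution, exactly parallel to how Observation~\ref{SolvePrimitive} and Remark~\ref{rmk:solve=radical} handle the $\vi_{solve}$ case. Compute a Gr\"obner basis $\mathcal G$ of $\ideal{G}$ with respect to any fixed monomial order $\prec$ (for instance via Buchberger's algorithm), and then test whether the normal form (remainder on division) of $h$ by $\mathcal G$ is zero: $h\in\ideal{G}$ if and only if $\overline{h}^{\,\mathcal G}=0$. This is precisely the membership criterion in \cite[Corollary 2 of \S2.6]{clo}. Since $G$ will always be a small set in the context of Clarkson's algorithm — of size at most the sample size, which is governed by the combinatorial dimension $\beta(\RR/I)$ — the Gr\"obner basis computation is on a small-size subsystem, matching the claim that the primitive solves a \emph{small} ideal membership problem.

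Strictly speaking there is almost nothing to prove here: the ``main obstacle'' is only the bookkeeping observation that $\ideal{S,f}\supsetneq\ideal{S}$ is equivalent to $f\notin\ideal{S}$ (the inclusion $\ideal{S}\subseteq\ideal{S,f}$ is automatic, and it is proper precisely when $f$ is not already in $\ideal{S}$), so that the two formulations of $\vi_{SmallGen}$ given just before Lemma~\ref{viosmallgen} genuinely agree. Once that is noted, identifying the primitive with ideal membership is immediate, and I would simply cite \cite{clo} for the Gr\"obner-basis implementation. In writing the proof I would keep it to two or three sentences, emphasizing the conceptual point — namely that, just as $\vi_{solve}$ produced a \emph{radical} ideal membership primitive, the coarser operator $\vi_{SmallGen}$ produces the ordinary ideal membership primitive — rather than belaboring the well-known Gr\"obner basis machinery.
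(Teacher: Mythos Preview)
Your proposal is correct and matches the paper's own treatment: the paper states this as an observation with no proof, followed only by the one-line remark that the query is ``usually Gr\"obner-based'' on small subsystems of size $O(\delta^2)$. Your unwinding of Definition~\ref{defn:idealViolator} into $h\notin\ideal{G}$ and the Gr\"obner-basis justification are exactly the intended reading, just spelled out more explicitly than the paper bothers to.
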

  Of course, the answer to the query is usually Gr\"obner-based, but, as before, the size of the subsystems $G$ on which we call the primitive query is small: ($O(\delta^2)$). In fact, it is easy to see that many small Gr\"obner computations for ideal membership cost less than the state-of-the-art, which includes at least one large Gr\"obner computation. 
  
\begin{proof}[Proof of Theorem~\ref{thm:MinGenAlgo}] From Lemma \ref{viosmallgen} we know $(\mH,\vi_{SmallGen})$ is a violator space
and we have shown it has a combinatorial dimension and a way to answer the primitive test.
Having these ingredients, Theorem \ref{keytoolvio} holds and it is possible for us to apply Clarkson's Algorithm.
\end{proof}

\begin{rmk}
Intuitively, the standard algorithm %(e.g.\ \cite{Massimo})
for finding minimal generators  needs to at least compute a Gr\"obner basis for an ideal generated by $|H|$ polynomials, and in fact it is much worse than that. One can simplify this by skipping the computation of useless $S$-pairs (e.g.\ as in \cite{gao2003grobner}), but improvement is not by an order of magnitude, overall. The algorithm remains doubly exponential in the size of $H$ for general input. In contrast, our randomized algorithm distributes the computation into many small Gr\"obner basis calculations, where ``many'' means no more than $\mathcal O(\beta |\mH|+\beta^\beta)$, and ``small" means the ideal is generated by only $O(\beta^2)$ polynomials.  
%In addition,  any improvement that can be done for the ``large'' Gr\"obner basis computation can also be done for our (smaller basis) primitive query computation.  
% it is folklore knowledge that the only way to improve  Gr\"obner bases algorithms is by `eliminating useless S-pairs computations' see for example a recent contribution  \cite{gao2003grobner}. Importantly, 
\end{rmk}

To conclude, in a forthcoming paper we will study further the structure of our violator spaces %, show that they are acyclic 
and discuss the use of more LP-type methods for the same algorithmic problems. 
We also intend to present some experimental results for the sampling techniques we discussed here. 
We expect  better performance of the randomized methods versus the traditional deterministic algorithms.

\section*{Acknowledgements} We are grateful to Nina Amenta, Robert Ellis, Diane Maclagan,  Pablo Sober\'on,  Cynthia Vinzant, 
and three anonymous referees  for many useful comments and references. 
 The first author is grateful to IMA, the Institute of Mathematics and its Applications of the Univ. of Minnesota for the support received when this
paper was being written. He was also supported by a UC MEXUS grant. This work was supported by the NSF grant  DMS-1522662. 

\bibliographystyle{plain}
\bibliography{randomized-ideals}

\end{document}